\documentclass[11pt, letter]{article}
\usepackage{amsthm}
\usepackage{graphicx} 

\usepackage{amsmath, amssymb, amsfonts, verbatim}
\usepackage{mathtools}
\usepackage{mathrsfs}
\usepackage{stmaryrd}
\usepackage{tcolorbox}

\usepackage{array}
\usepackage{multirow}

\usepackage{fullpage}

\usepackage{url}

\usepackage{thmtools}







\usepackage[margin=1in]{geometry}
\usepackage{fullpage}
\usepackage{tgtermes}
\usepackage[T1]{fontenc}
\usepackage[colorlinks,citecolor=blue,linkcolor=blue,urlcolor=black]{hyperref}
\usepackage{graphicx}
\usepackage{mathtools}
\usepackage{amsfonts,amsmath,amsthm,amssymb,dsfont}
\usepackage{xfrac,nicefrac}
\usepackage{mathdots}
\usepackage{bm,bbm}
\usepackage{url}
\usepackage{paralist}
\usepackage{enumerate}
\usepackage[normalem]{ulem}
\usepackage{xspace}
\xspaceaddexceptions{]\}}
\usepackage{comment}
\usepackage{tabu}
\usepackage{framed}
\usepackage{float,wrapfig}
\usepackage{tikz}
\usepackage[usenames,dvipsnames]{pstricks} 

\usepackage{enumitem}
\usepackage{adjustbox}
\usepackage{algorithm}
\usepackage[noend]{algpseudocode}
\usepackage[capitalize]{cleveref}
\usepackage{natbib}

\usepackage{thmtools}
\usepackage{thm-restate}

\usetikzlibrary{hobby}
\usetikzlibrary{decorations.markings}
\usetikzlibrary{quotes,angles}

\DeclarePairedDelimiter\abs{\lvert}{\rvert}%
\DeclarePairedDelimiter\norm{\lVert}{\rVert}%

\makeatletter
\let\oldabs\abs
\def\abs{\@ifstar{\oldabs}{\oldabs*}}
\let\oldnorm\norm
\def\norm{\@ifstar{\oldnorm}{\oldnorm*}}
\makeatother

\theoremstyle{plain}

\newtheorem{theorem}{Theorem}[section]
\newtheorem{lemma}[theorem]{Lemma}

\newtheorem{assumption}[theorem]{Assumption}
\theoremstyle{definition}

\newtheorem{remark}[theorem]{Remark}

\def\ShowAuthNotes{1}
\ifnum\ShowAuthNotes=1
\newcommand{\authnote}[2]{\ \\ \textcolor{red}{\parbox{0.9\linewidth}{[{\footnotesize {\bf #1:} { {#2}}}]}}\newline}
\else
\newcommand{\authnote}[2]{}
\fi

\renewcommand{\epsilon}{\varepsilon}

\renewcommand{\tilde}{\widetilde}
\renewcommand{\hat}{\widehat}



\newcommand{\extd}[1]{\hat{d}[#1]}
\newcommand{\exttrued}[1]{d(#1)}

\newcommand{\expectB}{B}
\newcommand{\expectU}{\tilde{U}}

\newcommand{\actualB}{B'}
\newcommand{\actualU}{U}


\newcommand{\pivotU}{W}
\newcommand{\pivotP}{P}

\newcommand{\layer}{l}

\newcommand{\ds}{\mathcal{D}}
\newcommand{\dsiter}[1]{\mathcal{D}_{#1}}

\newcommand{\edgeweight}[2]{w_{#1#2}}

\newcommand{\findpivotconst}{C}
\newcommand{\partitionconst}{C}
\newcommand{\timeconst}{C}

\newcommand{\keyvaluepair}[2]{\langle#1, #2\rangle}

\newcommand{\rawtree}{T}
\newcommand{\tree}[1]{\rawtree(#1)}
\newcommand{\boundtree}[2]{\rawtree_{<#2}(#1)}

\newcommand{\rangetree}[2]{\rawtree_{#2}(#1)}
\newcommand{\htree}{\mathcal{T}}

\newcommand{\xiao}[1]{{\textcolor{purple}{[\textbf{Xiao:} #1]}}}

\title{Breaking the Sorting Barrier for Directed Single-Source Shortest Paths}

\author{
    Ran Duan \thanks{Tsinghua University. Email: duanran@mail.tsinghua.edu.cn, mjy22@mails.tsinghua.edu.cn, ylh21@mails.tsinghua.edu.cn.}
    \and Jiayi Mao \footnotemark[1]
    \and Xiao Mao \thanks{Stanford University. Email: matthew99a@gmail.com.}
    \and Xinkai Shu \thanks{Max Planck Institute for Informatics. Email: xshu@mpi-inf.mpg.de.}
    \and Longhui Yin \footnotemark[1]
}

\begin{document}
\maketitle

\begin{abstract}
    We give a deterministic $O(m\log^{2/3}n)$-time algorithm for single-source shortest paths (SSSP) on directed graphs with real non-negative edge weights in the comparison-addition model. This is the first result to break the $O(m+n\log n)$ time bound of Dijkstra's algorithm on sparse graphs, showing that Dijkstra's algorithm is not optimal for SSSP. 
\end{abstract}


\section{Introduction}

In an $m$-edge $n$-vertex directed graph $G = (V, E)$ with a non-negative weight function $w: E \to \mathbb{R}_{\geq 0}$, single-source shortest path (SSSP) considers the lengths of the shortest paths from a source vertex $s$ to all $v\in V$. Designing faster algorithms for SSSP is one of the most fundamental problems in graph theory, with exciting improvements since the 50s.

The textbook Dijkstra's algorithm \cite{Dij59}, combined with advanced data structures such as the Fibonacci heap \cite{FT87} or the relaxed heap \cite{DJG88}, solves SSSP in $O(m+n\log n)$ time. It works in the \emph{comparison-addition} model, natural for real-number inputs, where only comparison and addition operations on edge weights are allowed, and each operation consumes unit time.
For undirected graphs, Pettie and Ramachandran \cite{PR05} proposed a hierarchy-based algorithm which runs in $O(m\alpha(m,n)+\min\{n\log n,n\log\log r\})$ time in the comparison-addition model, where $\alpha$ is the inverse-Ackermann function and $r$ bounds the ratio between any two edge weights. 

Dijkstra's algorithm also produces an ordering of vertices by distances from the source as a byproduct. A recent contribution by Haeupler, Hlad\'{\i}k, Rozho\v{n}, Tarjan and T\v{e}tek \cite{HHRTT24} showed that Dijkstra's algorithm is optimal if we require the algorithm to output the order of vertices by distances.
If only the distances and not the ordering are required, a recent result by Duan, Mao, Shu and Yin \cite{DMSY23} provided an $O(m\sqrt{\log n\log \log n})$-time randomized SSSP algorithm for undirected graphs, better than $O(n\log n)$ in sparse graphs. However it remains to break such a sorting barrier in directed graphs.


\subsection{Our results}

In this paper, we present the first SSSP algorithm for directed real-weighted graphs that breaks the sorting bottleneck on sparse graphs. 

\begin{theorem}\label{thm:main}
There exists a deterministic algorithm that takes $O(m\log^{2/3}(n))$ time to solve the single-source shortest path problem on directed graphs with real non-negative edge weights.
\end{theorem}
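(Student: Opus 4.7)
My plan is to design a recursive bounded multi-source shortest path procedure, which I will call \textsc{BMSSP}. Given a distance upper bound $B$ and a frontier set $S$ of vertices whose tentative distances are already correct, \textsc{BMSSP} returns the true distances of all vertices $v$ with $\dis(S,v) < B$. The top-level call uses $S = \{s\}$ and $B = +\infty$. The overarching goal is to never fully sort the frontier, instead extracting ``blocks'' of the $M$ smallest tentative distances at a time, where $M$ is chosen much smaller than $n$.

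The first ingredient is a \emph{FindPivots} subroutine. Starting from the current frontier $S$, I would run $k$ rounds of Bellman--Ford-style relaxations restricted to edges whose target tentative distance stays below $B$. Let $P \subseteq S$ be the set of $u$ whose shortest-path subtree under these relaxations contains at least $k$ vertices. A disjointness/counting argument then gives $|P| \leq |S|/k$, since the subtrees are vertex-disjoint and each non-pivot accounts for fewer than $k$ vertices. The crucial structural claim is that every vertex finalized at the current level either lies within the $k$-round exploration (and is thus returned directly) or is reached via some pivot in $P$. This lets the recursion branch only on the shrunken set $P$ rather than on all of $S$.

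The second ingredient is a custom \emph{partial-sorting data structure} $\ds$ supporting three operations: \textsc{Insert} of a key-value pair, \textsc{BatchPrepend} of a block of pairs whose keys are all smaller than anything currently stored, and \textsc{Pull} which returns the $M$ smallest keys together with a separator value. I would implement $\ds$ as a block-linked list where each block is only fully sorted once its size exceeds $M$, so that the amortized \textsc{Insert} cost is sublogarithmic in $N/M$. Plugged into \textsc{BMSSP} with $M$ chosen at each level to match the pivot budget, the total data-structure cost charged per edge becomes $O(\log^{1/3} n)$.

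Balancing parameters so that $\log k \approx \log^{1/3} n$ yields recursion depth $O(\log n / \log k) = O(\log^{2/3} n)$, while the Bellman--Ford rounds at each level cost $O(m)$ amortized through the FindPivots potential. \textbf{The main obstacle} is twofold. First, I must prove that the layered recursion correctly finalizes distances; this I would handle by induction on recursion depth, using an invariant stating exactly which vertices are finalized upon return of \textsc{BMSSP}$(B, S)$ in terms of $\dis(S, \cdot)$ and $B$. Second, and the deeper challenge, I must charge the amortized costs of $\ds$ and of the Bellman--Ford rounds to the recursion tree so that the per-edge work truly multiplies out to $O(\log^{2/3} n)$ rather than $O(\log n)$ times the recursion depth; this requires a careful layered amortization where operations created deep in the recursion are not repeatedly paid for by outer levels, and where the pivot shrinkage guaranteed by FindPivots is leveraged to bound the total number of \textsc{Insert}/\textsc{Pull} operations across all levels by $O(n)$.
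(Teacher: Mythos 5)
Your high-level architecture matches the paper's exactly: a recursive bounded multi-source shortest path procedure, a \textsc{FindPivots} subroutine that runs $k$ rounds of Bellman--Ford-style relaxation and keeps only roots of large subtrees, and a block-based data structure supporting \textsc{Insert}, \textsc{BatchPrepend}, and \textsc{Pull}. However, the quantitative parts of your sketch do not cohere, and there are two concrete errors that would prevent the analysis from going through.

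First, the bound $|P| \le |S|/k$ is false as stated. The disjointness argument gives $k|P| \le |W|$, where $W$ is the set of vertices touched during the $k$ relaxation rounds (not $S$): each pivot accounts for a subtree of at least $k$ vertices of $W$, and these subtrees are disjoint. Since generally $|W| \gg |S|$, what you can conclude is $|P| \le |W|/k \le |U|/k$ where $U$ is the returned set, which is exactly the bound the recursion needs. (Consider $|S|=1$ with a large reachable subtree: then $P=S$ but $|S|/k < 1$.) This is the bound that makes the ``expensive'' $O(t)$ insertion cost apply only to a $1/k$-fraction of the relevant vertices.

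Second, and more fundamentally, your parameter balancing is off because you conflate two roles into a single parameter $k$. You want $\log k \approx \log^{1/3} n$, i.e.\ $k = 2^{\Theta(\log^{1/3} n)}$, so that the recursion depth $\log n / \log k$ is $\log^{2/3} n$. But \textsc{FindPivots} performs $k$ rounds of relaxation and costs $\Theta(k \cdot |W|)$ per call, which summed over one level of the recursion tree (where the $U$-sets are disjoint and add up to $O(n)$) is $\Theta(nk)$. With your $k$ this is $\Theta\bigl(n \cdot 2^{\log^{1/3} n}\bigr)$ per level, which is superpolylogarithmic and already destroys the target bound, so your claim that ``Bellman--Ford rounds at each level cost $O(m)$'' cannot hold. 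The correct approach needs two independent parameters: $k$ for the number of relaxation rounds (and pivot shrinkage) and a second parameter $t$ for the recursion's branching factor ($M = 2^{(l-1)t}$ in the pull), so that recursion depth is $\lceil (\log n)/t \rceil$. Balancing \textsc{FindPivots} cost ($nk$ per level $\times$ depth $\log n/t$), pivot insertion cost ($n t/k$ per level $\times$ depth), and direct edge insertion cost ($m t$ total) forces $k = \log^{1/3} n$ and $t = \log^{2/3} n$, with depth $\log^{1/3} n$ --- not $\log^{2/3} n$. Your single-parameter scheme has no consistent choice of $k$ that makes all three quantities simultaneously $O(m\log^{2/3} n)$.

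Finally, your invariant ``\textsc{BMSSP}$(B,S)$ returns the true distances of all vertices $v$ with $\dis(S,v)<B$'' omits the mechanism by which the workload of a single call stays bounded. A call that must always finish up to $B$ can be forced to process $\omega(k \cdot |S|)$ vertices, breaking the $|P| \le |U|/k$ accounting and the charging to the recursion tree. You need to allow a partial execution: when $|U|$ exceeds $\Theta(k\cdot 2^{lt})$, return early with a reduced bound $B' < B$, and ensure the parent resumes from $B'$. This early-exit bound is what makes $|U| = \Theta(k\cdot 2^{lt})$ on partial executions, which in turn gives $|P| \le |S| \le |U|/k$ for those calls and lets $\sum_x |U_x|$ over one depth stay $O(n)$. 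Without it, the amortization you gesture at in your last paragraph does not close.
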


Note that the algorithm in~\cite{DMSY23} is randomized, thus our result is also the first deterministic algorithm to break such $O(m+n\log n)$ time bound even in undirected graphs.


\paragraph{Technical Overview.}
Broadly speaking, there are two traditional algorithms for solving the single-source shortest path problem:
\begin{itemize}
     \item Dijkstra's algorithm~\cite{Dij59}: via a priority queue, it each time extracts a vertex $u$ with the minimum distance from the source, and from $u$ relaxes its outgoing edges. It typically sorts vertices by their distances from the source, resulting in a time complexity of at least $\Theta(n \log n)$.
     \item Bellman-Ford algorithm~\cite{Bellman1958}: based on dynamic programming, it relaxes all edges for several steps. For finding shortest paths with at most $k$ edges, the Bellman-Ford algorithm can achieve this in $O(mk)$ time without requiring sorting.
\end{itemize}
Our approach merges these two methods through a recursive partitioning technique, similar to those used in bottleneck path algorithms as described in~\cite{GT88, CKTZZ, DLWX}.

At any point during the execution of the Dijkstra's algorithm, the priority queue (heap) maintains a ``frontier'' $S$ of vertices such that if a vertex $u$ is ``incomplete'' --- if the current distance estimate $\hat{d}[u]$ is still greater than the true distance $d(u)$ --- the shortest $s$-$u$ path must visit some complete vertex $v\in S$. In this case, we say $u$ is ``dependent'' on $v\in S$. (However, vertices in $S$ are not guaranteed to be all complete.) The Dijkstra's algorithm simply picks the vertex in $S$ closest to source, which must be complete, and then relaxes all edges from that vertex.

The running time bottleneck comes from the fact that sometimes the frontier may contain $\Theta(n)$ vertices. Since we constantly need to pick the vertex closest to source, we essentially need to maintain a total order between a large number of vertices, and are thus unable to break the $\Omega(n \log n)$ sorting barrier. Our most essential idea is a way to reduce the size of the frontier. Suppose we want to compute all distances that are less than some upper bound $B$. Let $\expectU$ be the set of vertices $u$ with $d(u) < B$ and the shortest $s$-$u$ path visits a vertex in $S$. It is possible to limit the size of our frontier $|S|$ to $|\expectU| / \log ^ {\Omega(1)}(n)$, or $1 / \log ^ {\Omega(1)}(n)$ of the vertices of interest. Given a parameter $k = \log ^ {\Omega(1)}(n)$, there are two possible cases:
\begin{itemize}
    \item If $|\expectU| > k|S|$, then our frontier already has size $|\expectU| / k$;
    \item Otherwise, suppose $|\expectU| \le k|S|$. By running Bellman-Ford step $k$ times from vertices in $S$, every vertex $u\in \expectU$ whose shortest $s$-$u$ path containing $<k$ vertices in $\expectU$ is complete. Otherwise the vertex $v\in S$ which $u$ is dependent on must have a shortest path tree rooted at it with $\geq k$ vertices in $\expectU$, so we can shrink the frontier $S$ to the set of ``pivots'', each of which has a shortest path tree of size $\geq k$, and the number of such pivots is bounded by $|\expectU|/k$.    
\end{itemize}

Our algorithm is based on the idea above, but instead of using a Dijkstra-like method where the frontier is dynamic and thus intractable, we use a divide-and-conquer procedure that consists of $\log n / t$ levels, each containing a set of frontier vertices and an upper bound $B$, such that a naive implementation would still spend $\Theta(t)$ time per frontier vertex and the running time would remain $\Theta(\log n)$ per vertex. 
We can however apply the aforementioned frontier reduction on each level so that the $\Theta(t)$ work only applies to the pivots, about $1 / \log ^ {\Omega(1)}(n)$ of the frontier vertices. Thus the running time per vertex gets reduced to $\log n / \log ^ {\Omega(1)}(n)$, which is a significant speedup.

\subsection{Related Works}
For the SSSP problem, if we allow the algorithm to run in the word RAM model with integer edge weights, a sequence of improvements beyond $O(n\log n)$ \cite{FW93,FW94, Thorup96, Raman96, Raman97, TM00, HT20} culminated in Thorup's linear-time algorithm for undirected graphs~\cite{Thorup00} and $O(m+n\log\log\min\{n,C\})$ time for directed graphs~\cite{Thorup04}, where $C$ is the maximum edge weight. 
For graphs with negative weights, recent breakthrough results include near-linear time $O(m^{1+o(1)}\log C)$ algorithms for SSSP with negative integer weights~\cite{flow,BNW22, BCF23}, and strongly subcubic time algorithm for SSSP with negative real weights~\cite{Fineman24,HJQ24}.

Based on the lower bound of $\Omega(n \log n)$ for comparison-based sorting algorithms, it was generally believed that such \emph{sorting barrier} exists for SSSP and many similar problems. Researchers have broken such a sorting barrier for many graph problems.
For example, Yao \cite{Yao75} gave a minimum spanning tree (MST) algorithm with running time $O(m\log\log n)$, which is further improved to randomized linear time \cite{KKT95}.
Gabow and Tarjan \cite{GT88} solves $s$-$t$ bottleneck path problem in $O(m\log^* n)$-time, later improved to randomized $O(m\beta(m,n))$ time~\cite{CKTZZ}, where $\beta(m,n)=\min\{k\geq 1: \log^{(k)}n\leq\frac{m}{n}\}$. For the single-source all-destination bottleneck path problem, there is a randomized $O(m\sqrt{\log n})$-time algorithm by Duan, Lyu, Wu and Xie \cite{DLWX}. For the single-source nondecreasing path problem, Virginia V.Williams~\cite{VW10} proposed an algorithm with a time bound of $O(m\log\log n)$.


\section{Preliminaries}
\label{sec:preliminaries}

We consider a directed graph $G=(V, E)$ with a non-negative weight function $w: E \to \mathbb{R}_{\geq 0}$, also denoted by $\edgeweight{u}{v}$ for each edge $(u,v)\in E$. Let $n = \abs{V}$, $m=\abs{E}$ be the number of vertices and edges in the graph.
In the single-source shortest path problem, we assume there is a source vertex in the graph denoted by $s$. The goal of our algorithm is to find the length of shortest path from $s$ to every vertex $v\in V$, denoted by $\exttrued{v}$. Without loss of generality, we assume that every vertex in $G$ is reachable from $s$, so $m\geq n-1$.

\paragraph{Constant-Degree Graph.} In this paper we assume that the algorithm works on a graph with constant in-degrees and out-degrees. For any given graph $G$ we may construct $G'$ by a classical transformation (similar to~\cite{Frederickson83}) to accomplish this:
\begin{itemize}
    \item Substitute each vertex $v$ with a cycle of vertices strongly connected with zero-weight edges. For every incoming or outgoing neighbor $w$ of $v$, there is a vertex $x_{vw}$ on this cycle.
    \item For every edge $(u,v)$ in $G$, add a directed edge from vertex $x_{uv}$ to $x_{vu}$ with weight $\edgeweight{u}{v}$.
\end{itemize}
It's clear that the shortest path is preserved by the transformation. Each vertex in $G'$ has in-degree and out-degree at most 2, while $G'$ being a graph with $O(m)$ vertices and $O(m)$ edges.

\paragraph{Comparison-Addition Model.} Our algorithm works under the \emph{comparison-addition} model, where all edge weights are subject to only comparison and addition operations. In this model, each addition and comparison takes unit time, and no other computations on edge weights are allowed.

\paragraph{Labels Used in the Algorithm.}
For a vertex $v\in V$, denote $\exttrued{u}$ as the length of shortest path from $s$ to $u$ in the graph. Similar to the Dijkstra's algorithm, our algorithm maintains a global variable $\extd{u}$ as a sound estimation of $\exttrued{u}$ (that is, $\extd{u} \geq \exttrued{u}$). Initially $\extd{s}=0$ and $\extd{v}=\infty$ for any other $v\in V$.
Throughout the algorithm we only update $\extd{v}$ in a non-increasing manner by relaxing an edge $(u, v)\in E$, that is, assign $\extd{v}\gets \extd{u}+\edgeweight{u}{v}$ when $\extd{u}+\edgeweight{u}{v}$ is no greater than the old value of $\extd{v}$.
Therefore each possible value of $\extd{v}$ corresponds to a path from $s$ to $v$.
If $\extd{x} = \exttrued{x}$, we say $x$ is \textit{complete}; otherwise, we say $x$ is \textit{incomplete}. If all vertices in a set $S$ are complete, we say $S$ is complete. Note that completeness is sensitive to the progress of the algorithm.
The algorithm also maintains a shortest path tree according to current $\extd{\cdot}$ by recording the predecessor $\textsc{Pred}[v]$ of each vertex $v$. When relaxing an edge $(u,v)$, we set $\textsc{Pred}[v] \gets u$.

\paragraph{Total order of Paths.}
Like in many papers on shortest path algorithms, we make the following assumption for an easier presentation of our algorithm:
\begin{assumption}\label{assumption:unique-path}
    All paths we obtain in this algorithm have different lengths.
\end{assumption}
This assumption is required for two purposes:
\begin{enumerate}
    \item To ensure that $\textsc{Pred}[v]$ for all vertices $v\in V$ keep a tree structure throughout the algorithm;
    \item To provide a relative order of vertices with the same value of $\extd{}$.
\end{enumerate}
Next, we show that this assumption does not lose generality since we can provide a total order for all paths we obtain. We treat each path of length $l$ that traverses $\alpha$ vertices $v_1=s, v_2, ..., v_\alpha$ as a tuple $\langle l, \alpha, v_\alpha, v_{\alpha-1},...,v_1\rangle$ (note that the sequence of vertices is reversed in the tuple). We sort paths in the lexicographical order of the tuple in principle. That is, first compare the length $l$. When we have a tie, compare $\alpha$. If there is still a tie, compare the sequence of vertices from $v_\alpha$ to $v_1$. 
Comparison between the tuples can be done in only $O(1)$ time with extra information of $\textsc{Pred}[]$ and $\alpha$ stored for each $\extd{v}$:
\begin{description}
    \item[Relaxing an edge $(u,v)$:] If $u\neq\textsc{Pred}[v]$, even if there is a tie in $l$ and $\alpha$, it suffices to compare between $u$ and $\textsc{Pred}[v]$, and if $u=\textsc{Pred}[v]$, then $\extd{u}$ is updated to currently ``shortest'' and $\extd{v}$ needs to get updated;
    \item [Comparing two different $\extd{u}$ and $\extd{v}$ for $u \neq v$:] Even if there is a tie in $l$ and $\alpha$, it suffices to compare the endpoints $u$ and $v$ only.
\end{description}
Therefore, in the following sections, we may assume that each path started from $s$ has a unique length.





\section{The Main Result}


\subsection{The Algorithm}

Recall that we work on SSSP from source $s$ in constant-degree graphs with $m=O(n)$. Let $k := \lfloor \log ^ {1/3}(n) \rfloor$ and $t := \lfloor \log ^ {2/3}(n) \rfloor$ be two parameters in our algorithm.
Our main idea is based on divide-and-conquer on vertex sets. We hope to partition a vertex set $U$ into $2^t$ pieces $U = U_1\cup U_2\cdots \cup U_{2^t}$ of similar sizes, where vertices in earlier pieces have smaller distances, and then recursively partition each $U_i$. In this way, the size of the sub-problem shrinks to a single vertex after roughly $(\log n) / t$ recursion levels. 
To construct our structure dynamically, each time we would try to compute the distances to a set of closest vertices (without necessarily recovering a complete ordering between their true distances), and report a boundary indicating how much progress we actually make.

Suppose at some stage of the algorithm, for every $u$ with $\exttrued{u}< b$, $u$ is complete and we have relaxed all the edges from $u$. We want to find the true distances to vertices $v$ with $\exttrued{v}\geq b$. To avoid the $\Theta(\log n)$ time per vertex in a priority queue, consider the ``frontier'' $S$ containing all current vertices $v$ with $b\leq \extd{v}<B$ for some bound $B$ (without sorting them). We can see that the shortest path to every incomplete vertex $v'$ with $b\leq \exttrued{v'}<B$ must visit some complete vertex in $S$. Thus to compute the true distance to every $v'$ with $b\leq \exttrued{v'}<B$, it suffices to find the shortest paths from vertices in $S$ and bounded by $B$.
We call this subproblem \emph{bounded multi-source shortest path} (BMSSP) and design an efficient algorithm to solve it. The following lemma summarizes what our BMSSP algorithm achieves. 



\begin{lemma} [Bounded Multi-Source Shortest Path] \label{lemma:bmssp}
We are given an integer level $\layer \in [0, \lceil \log (n) / t\rceil ]$,  a set of vertices $S$ of size $\leq 2 ^ {\layer t}$, and an upper bound $\expectB > \max_{x\in S}\extd{x}$.
Suppose that for every incomplete vertex $v$ with $\exttrued{v} < \expectB$, the shortest path to $v$ visits some complete vertex $u\in S$.

Then we have an sub-routine $\textsc{BMSSP}(\layer, \expectB, S)$ (\cref{alg:main}) in $O((kl + tl/k + t)|U|)$ time that outputs a new boundary $\actualB \leq \expectB$ and a vertex set $\actualU$ that contains every vertex $v$ with $\exttrued{v} < \actualB$ and the shortest path to $v$ visits some vertex of $S$. At the end of the sub-routine, $\actualU$ is complete. Moreover, one of the following is true:
\begin{description}
    \item[Successful execution] $\actualB = \expectB$.
    \item[Partial execution due to large workload] $\actualB < \expectB$, and $|\actualU| = \Theta\left(k 2 ^ {\layer t}\right)$.
\end{description}

\end{lemma}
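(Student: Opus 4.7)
The plan is induction on the level $\layer$. At the base case $\layer = 0$, $|S|\leq 1$, and a bounded Dijkstra from the single source in $S$ limited to $\Theta(k)$ extractions from a heap of size $O(k)$ gives $O(\log k)$ per extraction; this contributes the $O(t|U|)$ base term in the claimed bound.

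For the inductive step, I would run the algorithm in two stages. The \emph{pivot-extraction stage} performs $k$ rounds of Bellman--Ford-style edge relaxations from $S$, recording the set $W$ of reached vertices and a shortest-path forest rooted in $S$ of depth $<k$. Call $u\in S$ a \emph{pivot} if its subtree in this forest contains at least $k$ vertices of $W$, and collect the pivots into $P\subseteq S$; disjointness of subtrees then gives $|P|\leq |W|/k$. Any non-pivot $u\in S$ has all of its forest-descendants in $W$ completed by the $k$ rounds, so it may be discarded. The \emph{recursive stage} maintains $P$ (together with later re-insertions) in an auxiliary data structure keyed by $\extd{\cdot}$, repeatedly extracts the $2^{(\layer-1)t}$ smallest-key pivots as a batch $S_i$, invokes $\textsc{BMSSP}(\layer-1,\expectB,S_i)$ to obtain $(\actualB_i,\actualU_i)$, relaxes outgoing edges of $\actualU_i$, and reinserts newly-labeled vertices with label below $\expectB$. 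Stop once the structure is empty (successful execution, $\actualB=\expectB$) or once $|\bigcup_i \actualU_i|$ first reaches $\Theta(k\cdot 2^{\layer t})$ (partial execution, with $\actualB$ set to the minimum of the smallest outstanding sub-call boundary $\actualB_i$ and the minimum remaining key in the structure).

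Correctness follows by induction: every incomplete vertex $v$ with $\exttrued{v}<\expectB$ whose shortest path crosses $S$ either crosses a non-pivot (hence is completed by the $k$ Bellman--Ford rounds) or crosses some pivot $u\in P$, in which case $v$ is handled by the recursive call whose batch contains $u$. For the running time, Bellman--Ford contributes $O(k|U|)$ per level and $O(kl\cdot|U|)$ in total; the base level gives $O(t|U|)$; and there are $\Theta(2^t/k)$ batches per level, contributing $O(tl/k\cdot|U|)$ if each batch operation costs $O(t)$ amortized.

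The main obstacle I foresee is designing the auxiliary data structure so that \emph{batch} pulls of the $2^{(\layer-1)t}$ smallest elements and batch inserts cost $O(t)$ amortized \emph{per batch} rather than per item; a generic priority queue reintroduces a $\Theta(\log n)=\Theta(t)$ factor per item and wrecks the bound. A secondary subtlety lies in correctly defining $\actualB$ at a partial termination so that the returned $\actualU$ is provably complete: it seems essential to take $\actualB$ as the minimum of the stalled call's returned boundary and the smallest key still present in the structure, otherwise there could be incomplete vertices with $\exttrued{v}<\actualB$ whose shortest paths cross pivots not yet processed.
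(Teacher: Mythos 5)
Your high-level decomposition (pivots via $k$-round Bellman--Ford, then a loop of recursive calls on small batches pulled from a partial-sorting structure) is the same one the paper uses, but several of your specific steps either fail or leave a genuine hole.

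\textbf{The Bellman--Ford time bound has a gap.} You claim the pivot-extraction stage costs $O(k|\actualU|)$ per level. But without a cap, the set $W$ that the $k$ rounds reach is bounded only by $\tilde U := \boundtree{S}{B}$, and $\tilde U$ can be $\Theta(n)$ even when the call ends partially with $|\actualU| = \Theta(k2^{\layer t}) \ll n$. The paper avoids this by aborting Bellman--Ford the moment $|W| > k|S|$ and then simply setting $P := S$. This keeps the work at $O(k^2|S|) = O(k\min\{k|S|, |\tilde U|\})$, which is $O(k|\actualU|)$ in both the successful and partial case (in the partial case because $|\actualU| \geq k2^{\layer t} \geq k|S|$). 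Without the cap your running-time bound does not hold.

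\textbf{Non-pivot subtrees must be added to $\actualU$, not discarded.} You say a non-pivot $u\in S$ ``may be discarded.'' But the lemma requires $\actualU$ to contain \emph{every} vertex $v$ with $\exttrued{v}<\actualB$ whose shortest path visits some vertex of $S$ --- including vertices whose shortest paths go through non-pivots and never touch $P$. Those vertices never appear in any recursive $\actualU_i$, so if you drop them the output is incorrect. The paper explicitly appends $\{x\in \pivotU : \extd{x} < \actualB\}$ to $\actualU$ at the very end.

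\textbf{The Insert-versus-BatchPrepend distinction is load-bearing, not a side detail.} You correctly identify the data-structure design as the main obstacle, but you describe a single ``reinsert'' operation. The paper's amortization hinges on splitting relaxations into two kinds: a relaxation landing in $[\expectB_i, \expectB)$ is a \emph{direct insert} at $O(t)$ per item, and a relaxation or a leftover $S_i$ element landing in $[\actualB_i, \expectB_i)$ is a \emph{batch prepend} at only $O(\log k)$ per item. The $O(t)$ cost of a direct insert can be paid only once per edge over the whole recursion tree (because once an edge is used for a direct insert at a level, descendant levels have bounds below $\expectB_i$ and will never insert through it again), whereas the cheap batch prepends may recur on every level of an ancestor-descendant path. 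If you treat every reinsertion as a generic $O(t)$ insert, the terms that the paper bounds by $O(l\log k\,|\actualU|)$ become $O(lt\,|\actualU|)$ and the whole bound degrades back to $\Theta(\log n)$ per vertex. Relatedly, your count of ``$\Theta(2^t/k)$ batches per level contributing $O(tl/k\cdot|U|)$'' is not where the $tl/k$ term actually comes from; it comes from inserting the $O(|U|/k)$ pivots at $O(t)$ each, summed over $l$ recursion levels.

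\textbf{The correctness invariant needs real bookkeeping.} Your sentence ``$v$ is handled by the recursive call whose batch contains $u$'' skips the induction that actually makes the recursive precondition true: across iterations one must maintain that every incomplete $v$ with $\exttrued{v}<\expectB$ lies in $\tree{\ds^*}$, and maintaining this after a \emph{partial} sub-call requires batch-prepending both the leftover $S_i$ elements with $\extd{\cdot}\in[\actualB_i,\expectB_i)$ and the newly relaxed vertices in that range. Your choice of $\actualB$ at a partial stop (min of the stalled sub-call's $\actualB_i$ and the smallest remaining key) is fine --- it coincides with the paper's $\actualB_i$ because every remaining key is $\geq \actualB_i$ --- but the completeness of $\actualU$ below $\actualB$ still needs the invariant proved inductively over iterations, not just over levels.
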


On the top level of divide and conquer, the main algorithm calls \textsc{BMSSP} with parameters $l = \lceil (\log n)/t\rceil$, $S = \{s\}, B = \infty$. Because $|\actualU| \leq |V| = o(kn)$, it must be a successful execution, and the shortest paths to all vertices are found. With chosen $k$ and $t$, the total running time is $O(m\log^{2/3}n)$.

\textsc{BMSSP} procedure on level $l$ works by recursion, it first tries to ``shrink'' $S$ to size $|U|/k$ by a simple Bellman-Ford-like method (\cref{lemma:findpivots}), then it makes several recursive calls on level $(\layer - 1)$ until the bound reaches $B$ or the size of $U$ reaches $\Theta\left(k 2 ^ {\layer t}\right)$. We always make sure that a recursive call solves a problem that is smaller by a factor of roughly $1 / 2 ^ t$, so the number of levels of the recursion is bounded by $O((\log n) / t)$.
The main ideas are:

\begin{itemize}
    \item Every time we only select about $2^{(\layer - 1)t}$ vertices for our next recursive call, so we use a partial sorting heap-like data structure as described in~\cref{lemma:partition} to improve the running time.
    \item If we used all of $S$ in the recursive calls, then after all remaining levels $S$ could be fully sorted and nothing is improved. Thus \textsc{FindPivots} (\cref{alg:find-pivots}) procedure is crucial here, as it shows that only at most $|U|/k$ vertices of $S$ are useful in the recursive calls.
    \item Partial executions may be more complicated to analyze,
    so we introduce some techniques like \textsc{BatchPrepend} operation in \cref{lemma:partition}.
\end{itemize}



\paragraph{Finding Pivots.}


Recall that in the current stage, every $u$ such that $\exttrued{u}< b$ is complete, and we have relaxed all the edges from such $u$'s, and the set $S$ includes every vertex $v$ with current $b\leq\extd{v}<B$. Thus, the shortest path of every incomplete vertex $v$ such that $d(v) < B$ visits some complete vertex in $S$. (This is because the first vertex $w$ in the shortest path to $v$ with $\exttrued{w}\geq b$ is complete and included in $S$.)

The idea of finding pivots is as follows: we perform relaxation for $k$ steps (with a bound $B$). After this, if the shortest $s$-$v$ path with $b\leq\exttrued{v}<B$ goes through at most $k$ vertices $w$ with $b\leq\exttrued{w}<B$, then $v$ is already complete. Observe that the number of large shortest path trees from $S$, consisting of at least $k$ vertices and rooted at vertices in $S$, is at most $|\expectU| / k$ here, where $\expectU$ is the set of all vertices $v$ such that $\exttrued{v} < \expectB$ and the shortest path of $v$ visits some complete vertex in $S$. So only the roots of such shortest-path trees are needed to be considered in the recursive calls, and they are called ``pivots''.







\begin{lemma} [Finding Pivots] \label{lemma:findpivots}
Suppose we are given a bound $\expectB$ and a set of vertices $S$.
Suppose that for every incomplete vertex $v$ such that $\exttrued{v} < \expectB$, the shortest path to $v$ visits some complete vertex $u\in S$.

Denote by $\expectU$ the set that contains every vertex $v$ such that $\exttrued{v} < \expectB$ and the shortest path to $v$ passes through some vertex in $S$. The sub-routine $\textsc{FindPivots}(\expectB, S)$ (\cref{alg:find-pivots}) finds a set $\pivotU \subseteq \expectU$ of size   $O(k|S|)$ and a set of pivots $\pivotP \subseteq S$ of size at most $|\pivotU| / k$ such that for every vertex $x \in \expectU$, at least one of the following two conditions holds:
\begin{itemize}
    \item At the end of the sub-routine, $x \in \pivotU$ and $x$ is complete;
    \item The shortest path to $x$ visits some complete vertex $y \in \pivotP$.
\end{itemize}
Moreover, the sub-routine runs in time $O(k|\pivotU|)=O(\min\{k^2|S|, k|\expectU|\})$.
\end{lemma}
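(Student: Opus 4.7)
The plan is to run $k$ rounds of a Bellman-Ford-style edge relaxation starting from $S$, grow a reached set $W$, and extract pivots $P$ from $S$ based on the sizes of the subtrees induced in the resulting $\textsc{Pred}$-forest. Concretely, I would initialize $W_0 \gets S$ and, in round $i = 1, \ldots, k$, for every edge $(u,v)$ with $u \in W_{i-1}$ and $\hat{d}[u] + w_{uv} < B$, attempt a relaxation of $\hat{d}[v]$ and $\textsc{Pred}[v]$ (decreasing $\hat{d}[v]$ only on a strict improvement) and add $v$ to $W_i$. If at any point $|W_i| > k|S|$, abort and return $W = W_i$ with $P = S$; otherwise, after $k$ clean rounds set $W = W_k$ and $P = \{u \in S : |T_u| \geq k\}$, where $T_u$ is the subtree of the final $\textsc{Pred}$-forest rooted at $u$ when each vertex of $S$ is treated as a forest root.

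The size and runtime bookkeeping is then direct. The cap gives $|W| = O(k|S|)$, and bounded out-degree makes each round cost $O(|W_{i-1}|)$, so the total time is $O(k|W|)$. To show $W \subseteq \tilde{U}$ (which upgrades the runtime to $O(\min\{k^2|S|, k|\tilde{U}|\})$), I would observe that any $v \in W$ has $\hat{d}[v] < B$ realised along a path from $s$ through some vertex of $S$; either $\hat{d}[v] = d^*(v)$ and by Assumption~\ref{assumption:unique-path} that path is the unique shortest path and therefore visits $S$, or $v$ is incomplete and the lemma's hypothesis forces its shortest path to visit $S$. For $|P| \leq |W|/k$: in the early-termination branch this is immediate from the cap, and in the clean branch the $T_u$'s for distinct $u \in S$ are pairwise disjoint subsets of $W$, so $k \cdot |P| \leq \sum_{u \in P}|T_u| \leq |W|$.

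The heart of the proof is the dichotomy for each $x \in \tilde{U}$. A key preliminary I would establish is that the first $S$-vertex on the shortest path to any $x \in \tilde{U}$ is automatically complete: its shortest-path prefix contains no other $S$-vertex, so the lemma's hypothesis can only be satisfied if this vertex is itself complete. In the early-termination branch this first $S$-vertex is the complete $y \in P = S$ required by condition~(2). In the clean branch, I would let $u$ be the last $S$-vertex on the shortest path to $x$ and let $l$ be the number of edges from $u$ to $x$; an analogous propagation argument along the path shows that $u$ becomes complete by the end of the $k$ rounds. If $l \leq k$, the same propagation finalizes $\hat{d}[x] = d^*(x)$ and places $x$ into $W$, giving condition~(1). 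If $l > k$, the $k$ successors of $u$ along the path enter $W$ in rounds $1, \ldots, k$ with their $\textsc{Pred}$ pointers pinned, by Assumption~\ref{assumption:unique-path}, to the preceding vertex on the path; since none of these successors lies in $S$, their Pred-chains all terminate at $u$, forcing $|T_u| \geq k$, hence $u \in P$ and condition~(2).

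The main obstacle is handling intermediate $S$-vertices on the shortest path whose initial $\hat{d}$-values are incorrect: such a vertex could in principle misroute the Pred-chain or delay the propagation of correct distances past it. The saving grace is that each such intermediate $S$-vertex itself inherits the correct $\hat{d}$-value from a complete $S$-ancestor within the $k$ rounds available, after which Assumption~\ref{assumption:unique-path} pins its $\textsc{Pred}$ pointer to the unique shortest-path predecessor; this preserves the tree structure required for both the dichotomy argument and the $|P| \leq |W|/k$ counting.
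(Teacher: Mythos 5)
Your overall setup — $k$ rounds of bounded Bellman--Ford relaxation from $W_0 = S$, early abort when $|W| > k|S|$, extraction of pivots from the relaxation forest — matches the paper, and your bookkeeping for $W \subseteq \tilde U$, $|P| \le |W|/k$, and the runtime is essentially the paper's argument. However, the core dichotomy has a genuine gap.

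You choose $u$ to be the \emph{last} $S$-vertex on the shortest path to $x$ and assert that ``an analogous propagation argument along the path shows that $u$ becomes complete by the end of the $k$ rounds.'' This is false in general: the only $S$-vertex on the shortest path you know to be complete \emph{before} the procedure is the \emph{first} one, $y$ (your ``key preliminary'' proves exactly this), and correct distance values can propagate only one hop per round. If $u$ lies more than $k$ hops past $y$ along the path (with no other already-complete $S$-vertex in between), then $u$ is still incomplete after all $k$ rounds. Your own closing paragraph acknowledges the obstacle but the proposed ``saving grace'' --- that each intermediate $S$-vertex inherits the correct $\hat d$-value within the $k$ rounds --- is precisely the unjustified claim, and a path with a long stretch of non-$S$ vertices refutes it. Worse, even granting that $u$ becomes complete ``by the end'' of round $k$, your next step requires the $k$ successors of $u$ to enter $W$ with correct $\textsc{Pred}$ pointers in rounds $1,\ldots,k$; that requires $u$ to have been complete from the \emph{start}, contradicting the premise of your own propagation argument.

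The paper sidesteps all of this by working with the \emph{first} complete-before $S$-vertex $y$ on the shortest path to $x$ rather than the last $S$-vertex. Since $y$ is already complete at time zero, the $k-1$ vertices following $y$ on the shortest path enter $W$ in rounds $1,\ldots,k-1$ with tight $F$-edges back along the path, so either $x$ itself is within $k-1$ hops of $y$ and becomes complete and enters $W$, or the $F$-tree rooted at $y$ already has $\ge k$ vertices and $y \in P$. Your concern that intermediate $S$-vertices might cut off $y$'s subtree (which is presumably what pushed you toward the last $S$-vertex and your slightly different ``treat every $S$-vertex as a forest root'' definition of $T_u$) does not arise in the paper's formulation: in the paper's $F$, an intermediate $S$-vertex $z$ on the path within $k-1$ hops of $y$ has its correct predecessor also in $W$ and complete, so $z$ acquires an incoming $F$-edge and is \emph{not} a root; the chain passes through it back to $y$. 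To repair your proof, replace ``last $S$-vertex on the path'' with ``first $S$-vertex on the path that was complete before the call,'' drop the propagation claim for $u$ entirely, and argue directly from the $F$-tree rooted at that vertex.
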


\begin{algorithm}
    \caption{Finding Pivots}
    \label{alg:find-pivots}
    \begin{algorithmic}[1]
        \Function{\textsc{FindPivots}}{$B, S$}
            \Statex \textbullet~\textbf{requirement:} for every incomplete vertex $v$ with $\exttrued{v} < \expectB$, the shortest path to $v$ visits some complete vertex in $S$
            \Statex \textbullet~\textbf{returns:} sets $\pivotP, \pivotU$ satisfying the conditions in \cref{lemma:findpivots}
            \State $W\gets S$
            \State $W_0\gets S$
            \For {$i \gets 1$ to $k$} \Comment{Relax for $k$ steps}
                \State $W_i \gets \emptyset$
                \For {all edges $(u,v)$ with $u\in W_{i-1}$}
                    \If {$\extd{u} + \edgeweight{u}{v} \leq \extd{v}$}\label{line:relaxfootnote} 
                        \State $\extd{v} \gets \extd{u} + \edgeweight{u}{v}$
                        \If {$\extd{u} + \edgeweight{u}{v} < B$}
                            \State $W_i\gets W_i \cup \{v\}$
                        \EndIf
                    \EndIf
                \EndFor
                \State $W \gets W \cup W_i$
                \If {$|W|>k|S|$}
                    \State $P\gets S$
                    \State \Return $P, W$
                \EndIf
            \EndFor
            \State $F\gets \{(u,v)\in E:u,v\in W, \extd{v}=\extd{u} + \edgeweight{u}{v}\}$ \Comment{$F$ is a directed forest under~\cref{assumption:unique-path}}
            \State $P\gets \{u\in S: \text{$u$ is a root of a tree with $\geq k$ vertices in $F$}\}$
            \State \Return $P, W$
            \EndFunction
    \end{algorithmic}
\end{algorithm}

\begin{proof}
Note that all vertices visited by \cref{alg:find-pivots} are in $\expectU$, and those in $W$ are not guaranteed to be complete after the procedure. Also note that every vertex in $\expectU$ must visit some vertex in $S$ which was complete before \cref{alg:find-pivots}, since any incomplete vertex $v$ at that time with $d(v)<B$ must visit some complete vertex in $S$.

If the algorithm returns due to $|W|>k|S|$, we still have $|W| = O(k|S|)$ since the out-degrees of vertices are constant. For every incomplete vertex $v$ such that $\exttrued{v} < \expectB$, we know that the shortest path to $v$ visits some complete vertex $u\in S$. Since $P = S$, we must also have $u \in P$ and conditions in the lemma are thus satisfied.

If $|W|\leq k|S|$, $P$ is derived from $F$.
For any vertex $x\in \expectU$, consider the first vertex $y\in S$ on the shortest path to $x$ which was complete before \cref{alg:find-pivots}, then $y$ must be a root of a tree in $F$. If there are no more than $k-1$ edges from $y$ to $x$ on the path, $x$ is complete and added to $W$ after $k$ relaxations. Otherwise, the tree rooted at $y$ contains at least $k$ vertices, therefore $y$ is added to $P$.
Additionally, $|P|\leq |W|/k \leq |\expectU|/k$ since each vertex in $P$ covers a unique subtree of size at least $k$ in $F$.

To evaluate the time complexity, note that the size of $W$ is $O(\min\{k|S|,|\expectU|\})$, so each iteration takes $O(\min\{k|S|,|\expectU|\})$ time. Computing $P$ after the for-loop runs in $O(|W|)$ time for final $W$. Therefore, \textsc{FindPivots} finishes in $O(\min\{k^2|S|, k|\expectU|\})$ time. 
\end{proof}

We also use the following data structure to help adaptively partition a problem into sub-problems, specified in \cref{lemma:partition}:
\begin{lemma} \label{lemma:partition}
Given at most $N$ key/value pairs to be inserted, an integer parameter $M$, and an upper bound $\expectB$ on all the values involved, there exists a data structure that supports the following operations:

\begin{description}
    \item[Insert] Insert a key/value pair in amortized $ O(\max\{1,\log (N / M)\}) $ time. If the key already exists, update its value.
    
    \item[Batch Prepend] 
    Insert $L$ key/value pairs such that each value in $L$ is smaller than any value currently in the data structure, in amortized $O(L \cdot \max\{1, \log (L / M)\})$ time. If there are multiple pairs with the same key, keep the one with the smallest value. 
    
    \item[Pull] Return a subset $S'$ of keys where $|S'|\le M$ associated with the smallest $|S'|$ values and an upper bound $x$ that separates $S'$ from the remaining values in the data structure, in amortized $O(|S'|)$ time. Specifically, if there are no remaining values, $x$ should be $\expectB$. Otherwise, $x$ should satisfy $\max(S')<x\le \min(D)$ where $D$ is the set of elements in the data structure after the pull operation.

\end{description}
\end{lemma}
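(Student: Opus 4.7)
The plan is to represent the stored pairs as an ordered collection of blocks of size $\Theta(M)$, augmented with a balanced binary search tree whose leaves are the blocks and whose internal nodes are indexed by block upper-bound values. Since there are at most $O(N/M)$ blocks throughout the lifetime of the structure, the BST has depth $O(\log(N/M))$, giving the desired search cost and coarsening the usual $\Theta(\log N)$ per-operation overhead of a priority queue. I would split the structure into two contiguous parts: $D_0$, holding blocks produced by \textbf{Batch Prepend} and whose values are globally smaller than any value in $D_1$, and $D_1$, holding blocks produced by \textbf{Insert}. Every block is maintained at size in $[M/2, 2M]$, except possibly the frontmost block of each part.

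For \textbf{Insert}$(k, v)$, I would use the BST to locate in $O(\log(N/M))$ time the unique block whose value range contains $v$, place or update the pair there, and, if the block reaches size $2M$, split it at the median using linear-time selection; the $O(M)$ split cost is amortized against the $\Omega(M)$ prior Inserts that filled the block. For \textbf{Batch Prepend} of $L$ pairs, I would merge them with the (at most $M$) pairs already in the frontmost block of $D_0$, partition the combined set into $O(L/M + 1)$ sorted groups of size at most $M$ using recursive median-of-medians selection in $O(L \cdot \max(1, \log(L/M)))$ time, and prepend the resulting blocks to $D_0$ while updating the BST; the per-block BST updates are absorbed into the partitioning cost. \textbf{Pull} simply detaches and returns the frontmost block (of $D_0$ if non-empty, otherwise of $D_1$), and returns the smallest remaining value (or $\expectB$ if the structure is now empty) as the separator $x$, all in $O(|S'|)$ time.

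The main obstacle I anticipate is handling an \textbf{Insert} whose value $v$ falls inside the range currently covered by $D_0$, which the specification does not forbid even though \textbf{Batch Prepend} must be monotonically smaller than the existing contents; I would route such an insert into the appropriate block of $D_0$ via the same shared BST, and discharge any resulting block splits by the same amortization as in $D_1$. A second subtlety is keeping the total block count at $O(N/M)$ despite possibly many small batch prepends, which is where the merge-with-frontmost-block step during \textbf{Batch Prepend} is essential: it ensures that all but one block per part have size $\geq M/2$. The full time bounds then follow from a standard potential argument assigning credit proportional to each block's deviation from size $M$, with Inserts depositing $O(\log(N/M))$ credit and Batch Prepends depositing $O(L \cdot \max(1, \log(L/M)))$ credit; Pulls require no credit since they simply remove a block outright.
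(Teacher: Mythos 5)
Your core invariant---keeping all of $D_0$'s values strictly below all of $D_1$'s, and routing future \textsc{Insert}s into $D_0$ when their values land there---is where the proposal breaks. It forces $D_0$'s blocks into the shared balanced search tree, so every new $D_0$ block pays $\Theta(\log(N/M))$ for a tree update, whether it is created directly by a large \textsc{BatchPrepend} or by a split of the overflowing frontmost $D_0$ block after many small ones. That charge is not covered by the stated budget $O(L\cdot\max\{1,\log(L/M)\})$: take $M=1$ and a long run of singleton batch prepends; each new block costs $\Theta(\log N)$ in the tree while the per-element budget is $O(1)$. Your potential function (credit proportional to each block's deviation from size $M$) can stash at most $\Theta(M)$ credit between splits, which is insufficient whenever $M<\log(N/M)$---and this regime genuinely occurs in the paper's use of the structure (at level $\layer=1$, $M=1$ while $\log(N/M)=\Theta(t)$). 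A secondary, easily fixable mismatch: with blocks allowed to reach size $2M$, detaching the frontmost block as $S'$ can return more than $M$ keys, violating the $|S'|\le M$ guarantee.

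The paper's construction avoids all of this by \emph{not} maintaining any ordering between $D_0$ and $D_1$. \textsc{Insert} always goes into $D_1$, even if its value is smaller than some values currently in $D_0$; only $D_1$'s block upper bounds are kept in the balanced tree (so $D_1$ has $O(\max\{1,N/M\})$ blocks, each $\le M$ elements); $D_0$ is a plain block sequence with no bound on its block count, so a \textsc{BatchPrepend} of $L\le M$ pairs is just a new block prepended to a linked list, with no tree update at all and hence genuinely $O(L)$. The interleaving of values across $D_0$ and $D_1$ is reconciled only at \textsc{Pull} time: one collects a prefix of at most $M$ elements from $D_0$ and, separately, at most $M$ elements from $D_1$, selects the smallest $M$ among these $O(M)$ candidates by linear-time selection, deletes them, and returns the smallest surviving value (or $B$) as the separator $x$. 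Decoupling the two sequences and merging lazily at \textsc{Pull} is exactly the ingredient that lets the batch-prepend bound be independent of $\log(N/M)$, and it is the step your proposal is missing.
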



\begin{proof}
    We introduce a block-based linked list data structure to support the required operations efficiently. 

Specifically, the data is organized into two sequences of blocks, $\mathcal{D}_0$ and $\mathcal{D}_1$. $\mathcal{D}_0$ only maintains elements from batch prepends while $\mathcal{D}_1$ maintains elements from insert operations, so every single inserted element is always inserted to $\mathcal{D}_1$. Each block is a linked list containing at most $M$ key/value pairs. The number of blocks in $\mathcal{D}_1$ is bounded by $O(\max\{1, N/M\})$, while $\mathcal{D}_0$ does not have such a requirement. Blocks are maintained in the sorted order according to their values, 
that is, for any two key/value pairs $\langle a_1,b_1\rangle \in B_i$ and $\langle a_2, b_2\rangle \in B_j$ , where $B_i$ and $B_j$ are the $i$-th and $j$-th blocks from a block sequence, respectively, and $i<j$, we have $b_1 \le  b_2$. For each block in $\mathcal{D}_1$, we maintain an upper bound for its elements, so that the upper bound for a block is no more than any value in the next block. We adopt a self-balancing binary search tree (e.g.~Red-Black Tree~\cite{red-black}) to dynamically maintain these upper bounds, with $ O(\max\{1,\log (N / M)\})$ search/update time. 

For cases where multiple pairs with the same key are added, we record the status of each key and the associated value. 
If the new pair has a smaller value, we first delete the old pair then insert the new one,
ensuring that only the most advantageous pair is retained for each key.

For the operations required in \cref{lemma:partition}:
\setdescription{font=\normalfont}
\begin{description}
\item[$\textsc{Initialize}(M,B)$]  Initialize $\mathcal{D}_0$ with an empty sequence and $\mathcal{D}_1$ with a single empty block with upper bound $B$. Set the parameter $M$.

\item[$\textsc{Delete}(a,b)$] To delete the key/value pair $\langle a,b\rangle$, we remove it directly from the linked list, which can be done in $O(1)$ time. Note that it's unnecessary to update the upper bounds of blocks after a deletion. However, if a block in $\mathcal{D}_1$ becomes empty after deletion, we need to remove its upper bound in the binary search tree in $O(\max\{1,\log (N / M)\})$ time. Since \textsc{Insert} takes $O(\max\{1,\log (N / M)\})$ time for $\mathcal{D}_1$, deletion time will be amortized to insertion time with no extra cost.

\item[$\textsc{Insert}(a,b)$] To insert a key/value pair $\langle a,b\rangle$, we first check the existence of its key $a$. If $a$ already exists, we delete original pair $\langle a,b'\rangle$ and insert new pair $\langle a,b\rangle$ only when $b<b'$.


Then we insert $\langle a,b\rangle$ to $\mathcal{D}_1$. We first locate the appropriate block for it, which is the block with the smallest upper bound greater than or equal to $b$, using binary search (via the binary search tree) on the block sequence. $\langle a,b\rangle$ is then added to the corresponding linked list in $O(1)$ time. 
Given that the number of blocks in $\mathcal{D}_1$ is $O(\max\{1,N/M\})$, as we will establish later, the total time complexity for a single insertion is $ O(\max\{1,\log (N / M)\})$. 

After an insertion, the size of the block may increase, and if it exceeds the size limit $M$, a split operation will be triggered.

    \item [$\textsc{Split}$] When a block in $\mathcal{D}_1$ exceeds $M$ elements, we perform a split. First, we identify the median element within the block in $O(M)$ time~\cite{median-find}, partitioning the elements into two new blocks each with at most $\lceil M / 2\rceil$ elements --- elements smaller than the median are placed in the first block, while the rest are placed in the second. This split ensures that each new block retains about $\lceil M/2\rceil$ elements while preserving inter-block ordering, so the number of blocks in $\mathcal{D}_1$ is bounded by $ O(N / M) $. (Every block in $\mathcal{D}_1$ contains $\Theta(M)$ elements, including the elements already deleted.)
    After the split, we make the appropriate changes in the binary search tree of upper bounds in $O(\max\{1,\log (N / M)\})$ time.


    \item[$\textsc{BatchPrepend}(\mathcal{L})$] Let $L$ denote the size of $\mathcal{L}$. When $L\le M $, we simply create a new block for $\mathcal{L}$ and add it to the beginning of $\mathcal{D}_0$. Otherwise, we create $O(L/M)$ new blocks in the beginning of $\mathcal{D}_0$, each containing at most $\lceil M/2\rceil $ elements. We can achieve this by repeatedly taking medians which completes in $O(L\log(L/M))$ time.
    



    \item[$\textsc{Pull}()$] 
    
    To retrieve the smallest $M$ values from $\mathcal{D}_0 \cup \mathcal{D}_1$, we collect a sufficient prefix of blocks from $\mathcal{D}_0$ and $\mathcal{D}_1$ separately, denoted as $S'_0$ and $S'_1$, respectively. 
    That is, in $\mathcal{D}_0$ ($\mathcal{D}_1$) we start from the first block and stop collecting as long as we have collected all the remaining elements or the number of collected elements in $S_0'$ ($S_1'$) has reached $M$. If $S_0'\cup S_1'$ contains no more than $M$ elements, it must contain all blocks in $\mathcal{D}_0 \cup \mathcal{D}_1$, so we return all elements in $S_0'\cup S_1'$ as $S'$ and set $x$ to the upper bound $B$, and the time needed is $O(|S'|)$. Otherwise, we want to make $|S'|=M$, and because the block sizes are kept at most $M$, the collecting process takes $O(M)$ time. 

    Now we know the smallest $M$ elements must be contained in $S_0' \cup S_1'$ and can be identified from $S_0' \cup S_1'$ as $S'$ in $O(M)$ time. Then we delete elements in $S'$ from $\mathcal{D}_0$ and $\mathcal{D}_1$, whose running time is amortized to insertion time. Also set returned value $x$ to the smallest remaining value in $\mathcal{D}_0 \cup \mathcal{D}_1$, which can also be found in $O(M)$ time.


    
\end{description}
\end{proof}

We now describe our BMSSP algorithm (\cref{alg:main}) in detail. Recall that the main algorithm calls \textsc{BMSSP} with parameters $l = \lceil (\log n)/t\rceil, S = \{s\}, B = \infty$ on the top level.

In the base case of $\layer =  0$, $S$ is a singleton $\{x\}$ and $x$ is complete. We run a mini Dijkstra's algorithm (\cref{alg:base-case}) starting from $x$ to find the closest vertices $v$ from $x$ such that $\exttrued{v} < \expectB$ and the shortest path to $v$ visits $x$, until we find $k + 1$ such vertices or no more vertices can be found. Let $U_0$ be the set of them. If we do not find $k + 1$ vertices, return $\actualB \gets \expectB, \actualU \gets U_0$. Otherwise, return $\actualB \gets \max_{v\in U_0} \exttrued{v}, \actualU \gets \{v\in U_0:\exttrued{v} < \actualB\}$.

If $\layer > 0$, first we use $\textsc{FindPivots}$ from \cref{lemma:findpivots} to obtain a pivot set $\pivotP\subseteq S$ and a set $\pivotU$.
We initialize the data structure $\ds$ from \cref{lemma:partition} with $M := 2 ^ {(\layer - 1)t}$ and add each $x \in \pivotP$ as a key associated with value $\extd{x}$ to $\ds$.
For simplicity, we write $\ds$ as a set to refer to all the keys (vertices) in it.

Set $\actualB_{0} \gets \min_{x\in \pivotP} \extd{x}$; $\actualU \gets \emptyset$. The rest of the algorithm repeats the following iteration of many phases, during the $i$-th iteration, we:
\begin{enumerate}
    \item Pull from $\ds$ a subset $S_{i}$ of keys associated with the smallest values and $\expectB_i$ indicating a lower bound of remaining value in $\ds$ ;
    \item Recursively call $\textsc{BMSSP}(\layer - 1, \expectB_i, S_{i})$, obtain its return, $\actualB_i$ and $\actualU_i$, and add vertices in $\actualU_{i}$ into $\actualU$;
    \item Relax every edge $(u, v)$ for $u \in \actualU_i$ (i.e.~$\extd{v}\gets \min\{\extd{v}, \extd{u} + \edgeweight{u}{v}\}$).
    If the relax update is valid ($\extd{u} + \edgeweight{u}{v} \leq \extd{v}$), do the following (even when $\extd{v}$ already equals $\extd{u} + \edgeweight{u}{v}$):
    \begin{enumerate}
        \item if $\extd{u} + \edgeweight{u}{v} \in [\expectB_i, \expectB)$, then simply insert $\keyvaluepair{v}{\extd{u} + \edgeweight{u}{v}}$ into $\ds$;
        \item if $\extd{u} + \edgeweight{u}{v} \in [\actualB_i, \expectB_i)$, then record $\keyvaluepair{v}{\extd{u} + \edgeweight{u}{v}}$ in a set $K$;
    \end{enumerate}
    \item Batch prepend all records in $K$ and $\keyvaluepair{x}{\extd{x}}$ for $x\in S_{i}$ with $\extd{x} \in [\actualB_i, \expectB_i)$ into $\ds$;
    \item If $\ds$ is empty, then it is a successful execution and we return;
    \item If $|\actualU| > k2^{\layer t}$, set $\actualB \gets \actualB_{i}$. We expect a large workload and we prematurely end the execution.
\end{enumerate}

Finally, before we end the sub-routine, we update $\actualU$ to include every vertex $x$ in the set $\pivotU$ returned by $\textsc{FindPivots}$ with $\extd{x} < \actualB$.

\begin{algorithm}[H]
    \caption{Base Case of BMSSP}\label{alg:base-case}
    \begin{algorithmic}[1]
        \Function{\textsc{BaseCase}}{$\expectB, S$}
            \Statex \textbullet~\textbf{requirement 1:} $S = \{x\}$ is a singleton, and $x$ is complete 
            \Statex \textbullet~\textbf{requirement 2:} for every incomplete vertex $v$ with $\exttrued{v} < \expectB$, the shortest path to $v$ visits $x$
            \Statex \textbullet~\textbf{returns 1:} a boundary $\actualB \leq \expectB$
            \Statex \textbullet~\textbf{returns 2:} a set $\actualU$
            \State $U_0 \gets S$
            \State initialize a binary heap $\mathcal{H}$ with a single element $\langle x,\extd{x}\rangle$ \cite{williams1964}
            \While{$\mathcal{H}$ is non-empty and $|U_0| < k+1$}
                \State $\langle u,\extd{u}\rangle \gets \mathcal{H}.\textsc{ExtractMin}()$
                \State $U_0 \gets U_0 \cup \{u\}$
                \For {edge $e = (u, v)$}
                    \If {$\extd{u} + \edgeweight{u}{v} \leq \extd{v}$ and $\extd{u} + \edgeweight{u}{v} < \expectB$} \label{line:relax2} 
                        \State $\extd{v} \gets \extd{u} + \edgeweight{u}{v}$  
                        \If{$v$ is not in $\mathcal{H}$}
                            \State $\mathcal{H}.\textsc{Insert}(\langle v,\extd{v}\rangle)$
                        \Else
                            \State $\mathcal{H}.\textsc{DecreaseKey}(\langle v,\extd{v}\rangle)$
                        \EndIf
                    \EndIf
                \EndFor
            \EndWhile
            \If {$|U_0| \le k$}
                \State \Return $\actualB \gets \expectB, \actualU \gets U_0$
            \Else 
                \State \Return $\actualB \gets \max_{v\in U_0} \extd{v}, \actualU \gets \{v\in U_0:\extd{v} < \actualB\}$
            \EndIf
        \EndFunction
    \end{algorithmic}
\end{algorithm}

\begin{algorithm}[ht]
    \caption{Bounded Multi-Source Shortest Path} \label{alg:main}
    \begin{algorithmic}[1]
        \Function{\textsc{BMSSP}}{$\layer, \expectB, S$}
            \Statex \textbullet~\textbf{requirement 1:} $|S| \leq 2^{\layer t}$ 
            \Statex \textbullet~\textbf{requirement 2:} for every incomplete vertex $x$ with $\exttrued{x} < \expectB$, the shortest path to $x$ visits some complete vertex $y \in S$
            
            \Statex \textbullet~\textbf{returns 1:} a boundary $\actualB \leq \expectB$
            \Statex \textbullet~\textbf{returns 2:} a set $\actualU$
            \If {$\layer = 0$}
                \State \Return $\actualB, \actualU \gets \textsc{BaseCase}(B,S)$
            \EndIf
            \State $\pivotP, \pivotU \gets \textsc{FindPivots}(\expectB, S)$ \label{code:wp}
            \State $\ds.\textsc{Initialize}(M,B)$ with $M= 2 ^ {(\layer - 1)t}$ \Comment{ $\ds$ is an instance of \cref{lemma:partition} }
            \State $\ds.\textsc{Insert}(\langle x, \extd{x}\rangle)$ \textbf{for} $x\in \pivotP$ \label{code:ds_insert}
            \State $i\gets 0$; $\actualB_{0} \gets \min_{x\in \pivotP}\extd{x}$; $\actualU \gets \emptyset$ \Comment{If $\pivotP=\emptyset$ set $\actualB_{0} \gets \expectB$}
            \While {$|\actualU| < k2 ^ {\layer t}$ and $\ds$ is non-empty}
                \State $i\gets i+1$
                \State $\expectB_i, S_{i} \gets \ds.\textsc{Pull}()$ \label{code:pull}
                \State $\actualB_i, \actualU_i \gets \textsc{BMSSP}(\layer - 1, \expectB_{i}, S_{i})$ \label{code:recursive}
                \State $\actualU \gets \actualU \cup \actualU_i$
                \State $K \gets \emptyset$
                \For {edge $e = (u, v)$ where $u\in \actualU_{i}$}
                    \If {$\extd{u} + \edgeweight{u}{v} \leq \extd{v}$} \label{code:relax_condition} 
                        \State $\extd{v} \gets \extd{u} + \edgeweight{u}{v}$  
                        \If {$\extd{u} + \edgeweight{u}{v} \in [\expectB_i, \expectB)$}
                            \State $\ds.\textsc{Insert}(\langle v, \extd{u} + \edgeweight{u}{v} \rangle$) \label{code:single_insert}
                        \ElsIf {$\extd{u} + \edgeweight{u}{v} \in [\actualB_i, \expectB_i)$}
                            \State $K \gets K \cup \{\langle v, \extd{u} + \edgeweight{u}{v}\rangle\}$ \label{code:k_insert}
                        \EndIf
                    \EndIf
                \EndFor
                \State $\ds.\textsc{BatchPrepend}(K \cup \{\langle x,\extd{x}\rangle:x\in S_i \text{ and }\extd{x} \in [\actualB_i, \expectB_i)\} )$ \label{code:batch_prepend}
            \EndWhile
            \State \Return $\actualB \gets \min\{\actualB_{i}, \expectB\}$; $\actualU \gets \actualU \cup \{x\in \pivotU: \extd{x} < \actualB\}$ \label{code:return}
        \EndFunction
    \end{algorithmic}
\end{algorithm}

\begin{remark}
    Note that on \cref{line:relaxfootnote} of \cref{alg:find-pivots}, \cref{line:relax2} of \cref{alg:base-case} and \cref{code:relax_condition} of \cref{alg:main}, the conditions are ``$\extd{u} + \edgeweight{u}{v} \leq \extd{v}$''. The equality is required so that an edge relaxed on a lower level can be re-used on upper levels.
\end{remark}

\begin{remark} \label{remark:amortized-time}
    Using methods in \cref{lemma:partition} to implement data structure $\mathcal{D}$ in \cref{alg:main} at level $l$, where $M = 2^{(l-1)t}$, and $|S|\le 2^{lt}$, the total number of insertions $N$ is $O(k2^{lt})$, because of the fact that $|U|=O(k2^{lt})$, the constant-degree property, and the disjointness of $U_i$'s    
    (as established later in \cref{remark:u_i-disjointness}).
    Also, size of $K$ each time is bounded by $O(|U_i|)=O(k2^{(l-1)t})$. 
    Thus, insertion to $\mathcal{D}$ takes $O(\log k + t)=O(t)$ time, and Batch Prepend takes $O(\log k)=O(\log\log n)$ time per vertex in $K$.
\end{remark}


\subsection{Observations and Discussions on the Algorithm}\label{subsec:remark}

We first give some informal explanations on the algorithm, and then in the next subsections, we will formally prove the correctness and running time of the algorithm.

\paragraph{What can we get from the recursion?} Let's look at the recursion tree $\htree$ of~\cref{alg:main} where each node $x$ denotes a call of the \textsc{BMSSP} procedure. Let $\layer_x$ , $B_x$, and $S_x$ denote the parameters $\layer$, $B$, and $S$ at node $x$, $B'_x$ and $U_x$ denote the returned values $B'$ and $U$ at node $x$, $P_x$ and $W_x$ denote $P$ and $W$ (on \cref{code:wp}) returned by the \textsc{FindPivots} procedure at node $x$, respectively, and $W'_x$ denote the set of vertices $v\in W_x$ satisfying $\exttrued{v}<B'_x$. Then we expect:
\begin{enumerate}
    \item Since we assume that all vertices are reachable from $s$, in the root $r$ of $\htree$, we have $U_r=V$;
    \item $|S_x|\leq 2^{\layer_xt}$, so the depth of $\htree$ is at most $(\log n) / t=O(\log^{1/3} n)$; 
    \item We break when $|U_x| \ge k2 ^ {l_xt} > |S_x|$. Intuitively, the size of $U_x$ increases slowly enough so we should still have $|U_x| = O(k2 ^ {l_xt})$ (formally shown in \cref{lemma:size-constraint});
    \item \label{itm:success} If node $x$ is a successful execution, in the execution of \textsc{FindPivots}, $\expectU$ in~\cref{lemma:findpivots} is equal to $U_x$, so $|P_x|\leq |U_x|/k$ by~\cref{lemma:findpivots};
    \item \label{itm:partial} If node $x$ is a partial execution, $|U_x|\geq k2^{\layer_xt}\geq k|S_x|$, so $|P_x|\leq |S_x|\leq |U_x|/k$;
    \item \label{itm:tree} For each node $x$ of $\htree$ and its children $y_1,\cdots,y_q$ in the order of the calls, we have:
    \begin{itemize}
        \item $U_{y_1},\cdots, U_{y_q}$ are disjoint. For $i < j$, distances to vertices in $U_{y_i}$ are smaller than distances to vertices in $U_{y_j}$;
        \item $U_x=W'_x\cup U_{y_1}\cup\cdots\cup U_{y_q}$;
        \item Consequently, the $U_x$'s for all nodes $x$ at the same depth in $\htree$ are disjoint, and total sum of $|U_x|$ for all nodes $x$ in $\htree$ is $O(n(\log n) / t)$. 
    \end{itemize}
\end{enumerate}

\paragraph{Correctness.} In a call of \textsc{BMSSP}, let $\expectU$ denote the set that contains all vertices $v$ with $\exttrued{v} < B$ and the shortest path to $v$ visits some vertex in $S$. 
Then \textsc{BMSSP} should return $\actualU = \expectU$ in a successful execution, or $\actualU = \{u\in \expectU$ : $\exttrued{u} < \actualB\}$ in a partial execution, with all vertices in $U$ complete.

In the base case where $\layer = 0$, $S$ contains only one vertex $x$, a Dijkstra-like algorithm starting from $x$ completes the task.

Otherwise we first shrink $S$ to a smaller set of pivots $P$ to insert into $\ds$, with some vertices complete and added into $W$. \Cref{lemma:findpivots} ensures that the shortest path of any remaining incomplete vertex $v$ visits some complete vertex in $\ds$. For any bound $\expectB_i \leq \expectB$, if $\exttrued{v} < \expectB_i$, the shortest path to $v$ must also visit some complete vertex $u\in \ds$ with $\exttrued{u} < \expectB_i$. Therefore we can call subprocedure \textsc{BMSSP} on $\expectB_i$ and $S_i$.

By inductive hypothesis, each time a recursive call on \cref{code:recursive} of \cref{alg:main} returns, 
vertices in $\actualU_i$ are complete.
After the algorithm relaxes edges from $u \in \actualU_i$ and inserts all the updated  out-neighbors $x$ with $\extd{x} \in [\expectB'_i, \expectB)$ into $\ds$, once again any remaining incomplete vertex $v$ now visits some complete vertex in $\ds$.

Finally, with the complete vertices in $W$ added, $U$ contains all vertices required and all these vertices are complete.

\paragraph{Running time.} The running time is dominated by calls of \textsc{FindPivots} and the overheads inside the data structures $\ds$. By (\ref{itm:success}) and (\ref{itm:partial}), the running time of \textsc{FindPivots} procedure at node $x$ is bounded by $O(|U_x|k)$, so the total running time over all nodes on one depth of $\htree$ is $O(nk)$. Summing over all depths we get $O(nk\cdot (\log n) / t)=O(n\log^{2/3}n)$.

For the data structure $\ds$ in a call of \textsc{BMSSP}, the total running time is dominated by \textsc{Insert} and \textsc{Batch Prepend} operations. 
We analyze the running time as follows.
\begin{itemize}
    \item Vertices in $P$ can be added to $\ds$ on \cref{code:ds_insert}. Since $|P_x|=O(|U_x|/k)$, the total time for nodes of one depth of $\htree$ is $O(n/k\cdot (t+\log k))=O(nt/k)$. Summing over all depths we get $O((n\log n)/k)=O(n\log^{2/3}n)$.
    \item Some vertices already pulled to $S_i$ can be added back to $\ds$ through \textsc{BatchPrepend} on \cref{code:batch_prepend}. Here we know in every iteration $i$, $|S_i|\leq |U_i|$, so the total time for adding back is bounded by $\sum_{x\in\htree}|U_x|\cdot \log k=O(n \log k \cdot (\log n) / t) = O(n\cdot\log^{1/3}n\cdot\log\log n)$.
    \item A vertex $v$ can be inserted to $\ds$ through edge $(u,v)$ \textbf{directly} on \cref{code:single_insert} if $\expectB_i\leq \exttrued{u} + \edgeweight{u}{v} < \expectB$, or \textbf{through $K$} as on \cref{code:k_insert} if $\actualB_i\leq \exttrued{u} + \edgeweight{u}{v} < \expectB_i$. (Note that $u\in \actualU_{i}$ is already complete.) Every edge $(u,v)$ can only be relaxed once in each level by (\ref{itm:tree}), but it can be relaxed in multiple levels in an ancestor-descendant path of the recursion tree $\htree$. (Note that the condition on \cref{code:relax_condition} is $\extd{u} + \edgeweight{u}{v} \leq \extd{v}$.) However, we can show every edge $(u,v)$ can only lead $v$ to be directly inserted to $\ds$ by the \textsc{Insert} operation on one level.
    \begin{itemize}
        \item It can be easily seen that if $y$ is a descendant of $x$ in the recursion tree $\htree$, $B_y\leq B_x$.
        \item If $(u,v)$ leads $v$ to be directly inserted to $\ds$ on \cref{code:single_insert}, then $\exttrued{u} + \edgeweight{u}{v}\geq \expectB_i$. Since $u\in \actualU_{i}$, for every descendant $y$ of the current call in the recursion tree $\htree$ satisfying $u\in U_y$, we have $B_y\leq \expectB_i \leq \exttrued{u} + \edgeweight{u}{v}$, so $v$ will not be added to $\ds$ through $(u,v)$ in any way in lower levels.
    \end{itemize}
    \item Since every edge $(u,v)$ can only lead $v$ to be directly inserted to $\ds$ by the \textsc{Insert} operation once in the algorithm, the total  time is $O(m(\log k+t))=O(m\log^{2/3}n)$. The time for all edges $(u,v)$ leading $v$ to be inserted to $\ds$ through $K$ in one level is $O(m\log k)$, and in total is $O(m\log k\cdot (\log n)/t)=O(m\cdot\log^{1/3}n\cdot\log\log n)$.
\end{itemize}


\subsection{Correctness Analysis}\label{subsec:correctness}

To present the correctness more accurately and compactly, we introduce several notations on the ``shortest path tree''. Due to the uniqueness of shortest paths, the shortest path tree $\rawtree$ rooted at the source $s$ can be constructed unambiguously. 
Define $\tree{u}$ as the subtree rooted at $u$ according to $\exttrued{\cdot}$. An immediate observation is that \emph{the shortest path to $v$ passes through $u$ if and only if $v$ is in the subtree rooted at $u$}.

For a vertex set $S$, define $\tree{S} = \bigcup_{v\in S}\tree{v}$, namely the union of the subtrees of $T(s)$ rooted at vertices in $S$, or equivalently, the set of vertices whose shortest paths pass through some vertex in $S$.
For a set of vertices $S\subseteq V$, denote $S^* = \{v\in S: v \text{ is complete}\}$. Then $\tree{S^*} = \bigcup_{v\in S^*}\tree{v}$ is the union of subtrees of $T(s)$ rooted at $S^*$, or the set of vertices whose shortest paths pass through some complete vertex in $S$. Obviously, $\tree{S^*} \subseteq \tree{S}$.
Note that $\tree{S^*}$ is sensitive
to the progress of the algorithm while $\tree{S}$ is a fixed set. Another observation is that, a complete vertex would remain complete, so $S^*$ and $\tree{S^*}$ never lose a member.

For a bound $B$, denote $\boundtree{S}{B} = \{v\in \tree{S}: \exttrued{v} < B\}$. Also note that $\boundtree{S}{B}$ coincides with $\expectU$ mentioned above. For an interval $[b, B)$, denote $\rangetree{S}{[b, B)} =\{v\in \tree{S}: \exttrued{v} \in [b, B)\}$.



\begin{lemma}[Pull Minimum]\label{lemma:pull-minimum}
Suppose every incomplete vertex $v$ with $\exttrued{v} < B$ is in $\tree{S^*}$. Suppose we split $S$ into $X=\{x\in S:\extd{x}<\mathcal{B}\}$ and $Y=\{x\in S:\extd{x}\geq \mathcal{B}\}$ for some $\mathcal{B}<B$.
Then every incomplete vertex $v$ with $\exttrued{v} < \mathcal{B}$ is in $\tree{X^*}$. Moreover, for any $\mathcal{B}' < \mathcal{B}$, $\boundtree{S}{\mathcal{B}'} = \boundtree{X}{\mathcal{B}'}$.
\end{lemma}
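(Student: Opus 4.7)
The plan is to prove both assertions by chasing the shortest-path ancestor of a target vertex into $S$ and carefully using the hypothesis to conclude that the relevant ancestor is complete (hence its $\hat d$ equals its true distance, which makes it land in $X$).

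First, I would handle claim (1). Take an incomplete $v$ with $\exttrued{v} < \mathcal{B} \leq B$. Apply the hypothesis to get a complete vertex $u \in S$ on the shortest $s$-$v$ path. Since $u$ lies on that path, $\exttrued{u} \leq \exttrued{v} < \mathcal{B}$, and because $u$ is complete we have $\extd{u} = \exttrued{u} < \mathcal{B}$, so $u \in X$. Thus $u \in X^*$ and $v \in \tree{u} \subseteq \tree{X^*}$.

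Next I would handle claim (2). The inclusion $\boundtree{X}{\mathcal{B}'} \subseteq \boundtree{S}{\mathcal{B}'}$ is immediate from $X \subseteq S$. For the reverse inclusion, take $v \in \boundtree{S}{\mathcal{B}'}$; so $\exttrued{v} < \mathcal{B}' < \mathcal{B} \leq B$ and the shortest path to $v$ visits some vertex of $S$. Let $u_0$ be the \emph{first} vertex of that path which lies in $S$. The key sub-step is to argue that $u_0$ must be complete: indeed, if $u_0$ were incomplete, then since $\exttrued{u_0} \leq \exttrued{v} < B$, the hypothesis would produce a complete $u_1 \in S$ on the shortest path to $u_0$, but $u_1$ sits strictly earlier along the path to $v$, contradicting the choice of $u_0$. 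Once $u_0$ is known complete, $\extd{u_0} = \exttrued{u_0} \leq \exttrued{v} < \mathcal{B}'  < \mathcal{B}$, so $u_0 \in X$ and therefore $v \in \tree{u_0} \subseteq \tree{X}$, and combined with $\exttrued{v} < \mathcal{B}'$ we get $v \in \boundtree{X}{\mathcal{B}'}$.

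The step I expect to require the most care is the minimality argument for $u_0$ in claim (2): one has to rule out the scenario where the unique relevant vertex of $S$ on $v$'s shortest path is incomplete and therefore has some large $\extd{}$ value that places it into $Y$ rather than $X$. This is exactly where we cash in the standing hypothesis — it guarantees that the earliest $S$-vertex along any short shortest path is already complete, so its estimated distance equals its true distance and is small enough to fall into the $X$ side of the split. The rest of the argument is routine unpacking of the $\tree{\cdot}$, $\boundtree{\cdot}{\cdot}$, and $(\cdot)^*$ notations.
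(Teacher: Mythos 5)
Your proof is correct and follows essentially the same route as the paper's. The only difference is that for the second claim the paper simply asserts that the shortest path to $v$ passes through some vertex of $S^*$, whereas you supply the short minimality argument (the first $S$-vertex on the path must be complete, else the hypothesis would produce an earlier one) that justifies this; the paper leaves that step implicit.
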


\begin{proof}
For any incomplete vertex $v$ with $\exttrued{v} < \mathcal{B} $, as $\mathcal{B} < B$, by definition, $v\in \tree{u}$ for some complete vertex $u\in S$. Therefore $\extd{u} = \exttrued{u} \leq \exttrued{v} < \mathcal{B}$, so $u\in X$ and thus $v \in \tree{X^*}$.

For the second statement, it is clear that $\boundtree{X}{\mathcal{B}'} \subseteq \boundtree{S}{\mathcal{B}'}$. For any $v\in \boundtree{S}{\mathcal{B}'}$, since $d(v) < \mathcal{B}' < \mathcal{B} < B$, the shortest path to $v$ passes through some vertex $x\in S^*$. Now that $\extd{x} = \exttrued{x} \leq \exttrued{v} < \mathcal{B}'$, we have $x\in X$ and $v\in \boundtree{X}{\mathcal{B}'}$.
\end{proof}

Now we are ready to prove the correctness part of \cref{lemma:bmssp}.

\begin{lemma}\label{lemma:main-algo-correctness}
We prove the correctness of \cref{alg:main} by proving the following statement (the size of $\actualU$ is dealt with in \cref{lemma:size-constraint}), restated from \cref{lemma:bmssp}:
Given a level $\layer \in [0, \lceil (\log n) / t\rceil ]$, a bound $B$ and a set of vertices $S$ of size $\leq 2^{\layer t}$, suppose that every incomplete vertex $v$ with $\exttrued{v} < \expectB$ is in $\tree{S^*}$.

Then, after running \cref{alg:main}, we have: $\actualU = \boundtree{S}{\actualB}$ , and $\actualU$ is complete.
\end{lemma}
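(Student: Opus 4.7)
The plan is to prove the statement by induction on the level $\layer$, mirroring the algorithm's own recursion. The base case $\layer = 0$ reduces to standard correctness of \textsc{BaseCase}, which is a miniature Dijkstra from the complete singleton $x$: extracted vertices are complete and come out in order of $\exttrued{\cdot}$, so the branch $|U_0| \le k$ yields $\actualU = \boundtree{\{x\}}{\expectB}$ (the heap exhausted) and the branch $|U_0| = k+1$ yields $\actualU$ equal to the $k$ vertices with smallest distance, matching $\boundtree{\{x\}}{\actualB}$.

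For the inductive step at $\layer > 0$, I would maintain the invariant that at the start of each iteration $i$, every incomplete $v$ with $\exttrued{v} < \expectB$ lies in $\tree{Q_i^*}$, where $Q_i$ denotes the current keys in $\ds$. Initialization follows from \cref{lemma:findpivots}: every incomplete $v$ with $\exttrued{v} < \expectB$ has its shortest path through some complete $p \in \pivotP$, and $\pivotP$ is the initial content of $\ds$. To justify each recursive call, I would apply \cref{lemma:pull-minimum} to the split $Q_i = S_i \cup (Q_i \setminus S_i)$ at threshold $\expectB_i$, giving the precondition that every incomplete $v$ with $\exttrued{v} < \expectB_i$ lies in $\tree{S_i^*}$; the inductive hypothesis then yields $\actualU_i = \boundtree{S_i}{\actualB_i}$ complete. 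The most delicate step is re-establishing the invariant for iteration $i+1$: for an incomplete $v$ with $\exttrued{v} < \expectB$ whose previous anchor $u \in Q_i^*$ was just pulled into $S_i$, I would find a new anchor along $v$'s shortest path. If $\exttrued{u} \ge \actualB_i$, the BatchPrepend of $\{\langle x, \extd{x}\rangle : x \in S_i,\ \extd{x} \in [\actualB_i, \expectB_i)\}$ places $u$ itself back into $\ds$; otherwise, the first vertex $z$ on $v$'s shortest path with $\exttrued{z} \ge \actualB_i$ has predecessor $p \in \boundtree{S_i}{\actualB_i} = \actualU_i$, so relaxing $(p,z)$ makes $z$ complete, and $z$ is placed in $\ds$ either directly (when $\exttrued{z} \ge \expectB_i$) or through the set $K$ followed by BatchPrepend.

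With the invariant in hand, I would finish by analyzing termination. A key auxiliary fact is that after each iteration $i$ the minimum value stored in $\ds$ is at least $\actualB_i$, because direct inserts add values in $[\expectB_i, \expectB)$ and BatchPrepend in $[\actualB_i, \expectB_i)$; hence at loop exit $\actualB \le \min(\ds)$. For $\actualU \supseteq \boundtree{S}{\actualB}$, given $v \in \boundtree{S}{\actualB}$, \cref{lemma:findpivots} places $v$ in the complete portion of $\pivotU$ (picked up by the final-line update) or in $\tree{\pivotP^*}$; in the latter case, if $v$ were missing from every $\actualU_j$, the invariant at termination would supply a complete anchor $y$ with $\exttrued{y} \le \exttrued{v} < \actualB$, contradicting the minimum-value bound in $\ds$. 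Completeness of $\actualU$ is obtained by the same contradiction applied to vertices of $\pivotU$ with $\extd{x} < \actualB$, and the reverse inclusion uses $\pivotU \subseteq \tree{S}$ from \cref{lemma:findpivots} together with an auxiliary induction that every vertex placed in $\ds$ traces back to $S$ along shortest-path ancestors. I expect the main obstacle to be exactly the case analysis that re-establishes the invariant across an iteration: one must verify that every possible anchor replacement---whether by a stale $S_i$-key surviving BatchPrepend or by a newly completed boundary vertex $z$ discovered via relaxation---is faithfully captured by the combination of direct inserts, $K$, and BatchPrepend in the algorithm.
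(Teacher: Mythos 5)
Your overall architecture matches the paper's: an outer induction on the level $\layer$, an inner induction over iterations, \cref{lemma:pull-minimum} to legitimize each recursive call, and a case analysis of relaxation $+$ \textsc{BatchPrepend} to show the frontier in $\ds$ is maintained. Your re-establishment argument for the anchor (via the first vertex $z$ on the shortest path with $\exttrued{z}\geq\actualB_i$, whose predecessor lies in $\actualU_i$) is essentially the paper's argument for the inclusion $\rangetree{\dsiter{i}^*}{[\actualB_i,\expectB)}\subseteq \boundtree{\dsiter{i+1}^*}{\expectB}$.

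However, your running invariant is strictly weaker than what the paper proves, and this produces a genuine gap at the termination step. You only maintain the analog of the paper's proposition (a), namely that every \emph{incomplete} $v$ with $\exttrued{v}<\expectB$ lies in $\tree{Q_i^*}$. The paper additionally maintains proposition (b): $\rangetree{\pivotP}{[\actualB_{i-1},\expectB)} = \boundtree{\dsiter{i}}{\expectB} = \boundtree{\dsiter{i}^*}{\expectB}$, i.e.\ an exact tree-structural characterization of the contents of $\ds$, independent of which vertices happen to be complete. This is what yields the clean identity $\actualU_i = \rangetree{\pivotP}{[\actualB_{i-1},\actualB_i)}$, so that $\bigcup_i\actualU_i = \rangetree{\pivotP}{[\actualB_0,\actualB_q)}$ exactly covers the part of $\boundtree{S}{\actualB}$ below $\pivotP$. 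Your final contradiction ("the invariant at termination would supply a complete anchor $y$ with $\exttrued{y}<\actualB$, contradicting the min-value bound in $\ds$") only applies to vertices that are still \emph{incomplete} at termination. The conclusion $\actualU=\boundtree{S}{\actualB}$ is a statement about the subtree $\tree{S}$, not about incompleteness, so you must also account for a vertex $v\in\tree{\pivotP}$ with $\exttrued{v}<\actualB$ that becomes complete at some point during the loop (say via a relaxation inside a recursive call) yet is never placed into any $\actualU_j$ — your invariant is vacuous for such a $v$ and gives no contradiction. Chasing such $v$ through $\ds$ across iterations can be done, but it amounts to reconstructing proposition (b); as written, your proof simply does not cover this case. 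The same under-specification affects your reverse inclusion $\actualU\subseteq\boundtree{S}{\actualB}$, which you defer to an unstated "auxiliary induction." I'd recommend upgrading the invariant to the paper's two-part form and then deriving the disjoint interval decomposition $\actualU_i = \rangetree{\pivotP}{[\actualB_{i-1},\actualB_i)}$, from which both inclusions and completeness follow by a short sandwiching argument.
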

\begin{proof}
We prove by induction on $\layer$. When $\layer =  0$, since $S=\{x\}$ is a singleton, $x$ must be complete. Then clearly the classical Dijkstra's algorithm (\cref{alg:base-case}) finds the desired $\actualU = \boundtree{S}{\actualB}$ as required.
Suppose correctness holds for $\layer - 1$, we prove that it holds for $\layer$.
Denote $\dsiter{i}$ as the set of vertices (keys) in $\ds$ just before the $i$-th iteration (at the end of the $(i-1)$-th iteration), and $\dsiter{i}^*$ as the set of complete vertices in $\dsiter{i}$ at that time.
Next, we prove the following two propositions by induction on increasing order of $i$: 



Immediately before the $i$-th iteration,
\begin{itemize}
    \item[(a)] Every incomplete vertex $v$ with $\exttrued{v} < B$ is in $\rangetree{P}{[\actualB_{i - 1}, \expectB)}$.
    \item[(b)] $\rangetree{P}{[\actualB_{i - 1}, \expectB)} = \boundtree{\dsiter{i}}{\expectB} = \boundtree{\dsiter{i}^*}{\expectB}$; 
\end{itemize}

In the base case $i=1$, by \cref{lemma:findpivots}, for every incomplete vertex $v$ with $\exttrued{v} < \expectB$ (including $v\in \pivotP$), $v \in \boundtree{\pivotP^*}{\expectB}$. Then $\exttrued{v}\geq \min_{x\in \pivotP^*}\exttrued{x} = \min_{x\in \pivotP^*}\extd{x} \geq \actualB_0$. Thus every such $v$ is in $\rangetree{\pivotP}{[\actualB_0, \expectB)}$ (actually $\boundtree{\pivotP}{\actualB_0}$ is an empty set) 
and $\rangetree{\pivotP}{[\actualB_0, \expectB)} = \boundtree{\pivotP}{\expectB} = \boundtree{\pivotP^*}{\expectB}$. Because $\dsiter{1} = \pivotP$, the base case is proved.

Suppose both propositions hold for $i$. Then each incomplete vertex $v$ with $\exttrued{v} < B$ is in $\rangetree{P}{[\actualB_{i - 1}, \expectB)} \subseteq \tree{\dsiter{i}^*}$. By \cref{lemma:partition}, the suppositions of \cref{lemma:pull-minimum} are met for $X := S_i$, $Y := \ds_{i} \backslash S_i$, and $\mathcal{B} := \expectB_i$, so every incomplete vertex $v$ with $d(v) < \expectB_i$ is in $T({S_i} ^ *)$; also note $|S_i|\leq 2^{(l-1)t}$.
By induction hypothesis on level $\layer - 1$, the $i$-th recursive call is correct and we have $\actualU_i = \boundtree{S_i}{\actualB_i}$ and is complete.
Note that $S_i$ includes all the vertices $v$ in $\dsiter{i}$ such that $\extd{v}<\actualB_{i}$ and that $\actualB_{i} \le B$. Thus by \Cref{lemma:pull-minimum} and proposition (b) on case $i$, $\actualU_i = \boundtree{S_i}{\actualB_i} = \boundtree{\dsiter{i}}{\actualB_i} = \rangetree{P}{[\actualB_{i - 1}, \actualB_{i})}$ and is complete (thus proving \Cref{remark:u_i-disjointness}).
Then, every incomplete vertex $v$ with $\exttrued{v} < B$ is in $\rangetree{\pivotP}{[\actualB_{i-1}, \expectB)} \setminus \rangetree{\pivotP}{[\actualB_{i-1}, \actualB_{i})} = \rangetree{\pivotP}{[\actualB_{i}, \expectB)}$. Thus, we proved proposition (a) for the $(i+1)$-th iteration.

Now we prove proposition (b) for the $(i+1)$-th iteration. Since $\actualB_{i} \geq \actualB_{i - 1}$, from proposition (b) of the $i$-th iteration, we have $\rangetree{P}{[\actualB_{i}, \expectB)} = \rangetree{\dsiter{i}}{[\actualB_{i}, \expectB)} = \rangetree{\dsiter{i}^*}{[\actualB_{i}, \expectB)}$.
Suppose we can show that $\rangetree{\dsiter{i}^*}{[\actualB_{i}, \expectB)} \subseteq \boundtree{\dsiter{i + 1}^*}{\expectB}$ and $\boundtree{\dsiter{i+1}}{\expectB} \subseteq \rangetree{\dsiter{i}}{[\actualB_{i}, \expectB)}$. By definition $\boundtree{\dsiter{i + 1}^*}{\expectB} \subseteq \boundtree{\dsiter{i + 1}}{\expectB}$, then
\[\rangetree{\pivotP}{[\actualB_i, \expectB)} = \rangetree{\dsiter{i}^*}{[\actualB_i, \expectB)} \subseteq \boundtree{\dsiter{i + 1}^*}{\expectB} \subseteq \boundtree{\dsiter{i + 1}}{\expectB} \subseteq \rangetree{\dsiter{i}}{[\actualB_i, \expectB)} = \rangetree{\dsiter{i}^*}{[\actualB_i, \expectB)},\]
and by sandwiching, proposition (b) holds for $(i + 1)$. Thus it suffices to prove $\rangetree{\dsiter{i}^*}{[\actualB_{i}, \expectB)} \subseteq \boundtree{\dsiter{i + 1}^*}{\expectB}$ and $\boundtree{\dsiter{i + 1}}{\expectB} \subseteq \rangetree{\dsiter{i}}{[\actualB_{i}, \expectB)}$.

For any vertex $y\in \dsiter{i}^* \setminus \dsiter{i + 1}^*$, we have $y \in S_i$. Note that vertices $v$ of $S_i$ with $\extd{v}\geq \actualB_i$ are batch-prepended back to $\dsiter{i + 1}$ on \cref{code:batch_prepend}, so we have $\extd{y} < \actualB_i$. Thus $y\in \actualU_i$ and is complete. For any vertex $x\in \rangetree{y}{[\actualB_i, \expectB)}$, as $x\not \in \actualU_i$, along the shortest path from $y$ to $x$, there is an edge $(u, v)$ with $u\in \actualU_i$ and $v\not\in \actualU_i$. During relaxation, $v$ is then complete and is added to $\dsiter{i+1}$, so $x\in \rangetree{v}{[\actualB_i, \expectB)} \subseteq \boundtree{\dsiter{i + 1}^*}{\expectB}$.

For any vertex $v\in \dsiter{i + 1} \setminus \dsiter{i}$, since $v$ is added to $\dsiter{i + 1}$, 
there is an edge $(u, v)$ with $u\in \actualU_i = \boundtree{\dsiter{i}}{\actualB_{i}}$ and the relaxation of $(u, v)$ is valid ($\extd{u} + \edgeweight{u}{v} \leq \extd{v}$). Pick the last valid relaxation edge $(u,v)$ for $v$. If $v\in \tree{u}$, then $v$ is complete, $\exttrued{v}=\extd{v}\geq \actualB_{i}$, and $v\in \rangetree{\dsiter{i}}{[\actualB_{i}, B)}$; if $v$ is incomplete, then by proposition (a), $v\in \rangetree{\pivotP}{[\actualB_i, \expectB)} = \rangetree{\dsiter{i}}{[\actualB_i, \expectB)}$; or if $v$ is complete but $v \notin \tree{u}$, we have a contradiction because it implies that relaxation of $(u, v)$ is invalid.

Now that we have proven the propositions, we proceed to prove that the returned $\actualU = \boundtree{S}{\actualB}$ and that $\actualU$ is complete. Suppose there are $q$ iterations.
We have shown that every $U_i = \rangetree{P}{[\actualB_{i - 1}, \actualB_i)}$ and is complete, so $U_i$'s are also disjoint. By \cref{lemma:findpivots}, $\pivotU$ contains every vertex in $\boundtree{S \setminus \pivotP}{B}$ (as they are not in $\tree{\pivotP}$) and they are complete; besides, all vertices $v$ in $W$ but not in $\boundtree{S \setminus \pivotP}{B}$ with $\exttrued{v} < \actualB_q$ are complete (because they are in $\boundtree{\pivotP}{\actualB_q}$). Therefore, $\{x\in \pivotU: \extd{x} < \actualB_q\}$ contains every vertex in $\boundtree{S \setminus \pivotP}{\actualB_q}$ and is complete.
Thus finally $\actualU := (\bigcup_{i = 1}^{q}U_i)\cup \{x \in \pivotU: \extd{x} < \actualB_{q}\}$ equals $\boundtree{S}{\actualB_{q}}$ and is complete.
\end{proof}

\begin{remark}\label{remark:u_i-disjointness}
From the proof of \cref{lemma:main-algo-correctness}, $\actualU_i = \rangetree{\pivotP}{[\actualB_{i - 1}, \actualB_{i})}$ and they are disjoint and complete. As a reminder, \cref{lemma:pull-minimum} and proposition (b) of \cref{lemma:main-algo-correctness} prove this.
\end{remark}

\subsection{Time Complexity Analysis}

\begin{lemma}
\label{lemma:size-constraint}
Under the same conditions as in \cref{lemma:main-algo-correctness}, where every incomplete vertex $v$ with $\exttrued{v} < B$ is in $\boundtree{S^*}{\expectB}$.
After running \cref{alg:main}, we have $|\actualU| \leq 4k2^{\layer t}$. If $\actualB < \expectB$, then $|\actualU| \geq k2^{lt}$.
\end{lemma}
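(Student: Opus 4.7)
The plan is to prove both bounds by induction on $\layer$, decomposing the returned $\actualU$ using the structure exposed in \cref{remark:u_i-disjointness}: at the return statement of \cref{alg:main}, $\actualU = \bigl(\bigcup_{i=1}^{q} \actualU_i\bigr) \cup \{x \in \pivotU : \extd{x} < \actualB\}$, where $q$ is the number of while-loop iterations, and the $\actualU_i$ are pairwise disjoint. The base case $\layer = 0$ is read off directly from \cref{alg:base-case}: \textsc{BaseCase} grows $U_0$ until $|U_0| = k+1$ or the heap empties, so $|\actualU| \le k \le 4k \cdot 2^{0 \cdot t}$ in the successful case and $|\actualU| = k = k \cdot 2^{0 \cdot t}$ in the partial case (after removing the maximizer).

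For the inductive step $\layer \ge 1$, the upper bound is assembled from three ingredients. First, the while-loop guard guarantees $|\bigcup_{i<q}\actualU_i| < k 2^{\layer t}$ entering the final iteration. Second, the inductive hypothesis applied to the recursive call gives $|\actualU_q| \le 4k \cdot 2^{(\layer - 1)t}$, so $|\bigcup_{i \le q}\actualU_i| \le k 2^{\layer t}\bigl(1 + 4 \cdot 2^{-t}\bigr)$. Third, \cref{lemma:findpivots} bounds $|\pivotU| = O(k|S|) \le O(k 2^{\layer t})$; the explicit constant here comes from constant out-degree and the early-exit condition $|W| > k|S|$ in \textsc{FindPivots}, giving $|\pivotU| \le 3k|S|$. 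Summing these contributions and using that $t = \lfloor \log^{2/3} n\rfloor$ is taken sufficiently large, the total is $\le 4k \cdot 2^{\layer t}$.

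For the lower bound in the partial case $\actualB < \expectB$, the plan is to show that the loop must have exited by the size condition $|\actualU| \ge k 2^{\layer t}$, which immediately delivers the bound. The alternative is that $\ds$ became empty, and I would rule this out in conjunction with $\actualB < \expectB$ as follows. Since elements are only added to $\ds$ by \textsc{Insert} and \textsc{BatchPrepend}, $\ds$ being empty at the end of iteration $q$ forces (i) the pull to have emptied $\ds$, so $\expectB_q = \expectB$; (ii) the set $K$ to be empty; and (iii) every $x \in S_q$ to lie in $\actualU_q$ (otherwise $x$ would be batch-prepended back into $\ds$). Then, for any incomplete $v$ with $\exttrued{v} \in [\actualB_q, \expectB)$ whose shortest path visits $S_q$, taking the last vertex $u \in \actualU_q$ on that shortest path and the next vertex $v''$ yields an edge $(u, v'')$ with $\extd{u} + \edgeweight{u}{v''} = \exttrued{v'') \in [\actualB_q, \expectB)$, contradicting $K = \emptyset$. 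Combined with $S_q \subseteq \actualU_q$, this implies $\boundtree{S_q}{\expectB} = \actualU_q$, i.e.~the recursive call effectively covered its entire domain, so one can conclude $\actualB_q = \expectB_q = \expectB$ and hence $\actualB = \expectB$, a contradiction.

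The main obstacle I anticipate is exactly this lower-bound case analysis: routinely, if the recursive call returns partially with $\actualB_q < \expectB$, it is tempting to assume $\ds$ must inherit leftover mass, but the argument above has to carefully trace shortest-path tree edges and rule out the degenerate scenario where $\actualU_q$ already absorbs all relevant vertices. A secondary but more mechanical obstacle is pinning down the constant $4$ in the upper bound, which requires being careful with the exact constant in \cref{lemma:findpivots} and with the $(1 + 4 \cdot 2^{-t})$ factor from the recursive induction.
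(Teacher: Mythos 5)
Your upper-bound argument follows the paper's proof: induction on $\layer$, using the while-loop guard to bound $|\bigcup_{i<q}\actualU_i|<k2^{\layer t}$, the inductive hypothesis on level $\layer-1$ to bound the last iteration's contribution, and $|\pivotU|=O(k|S|)$ from \cref{lemma:findpivots}. (You correctly note that the early-exit path of \textsc{FindPivots} actually gives $|\pivotU|\le 3k|S|$ rather than $k|S|$ for out-degree $2$; this is a minor constant-bookkeeping point that the paper glosses over and does not affect anything asymptotically.)

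For the lower bound, the paper simply asserts ``If $\ds$ is empty, the algorithm succeeds and quits with $\actualB=\expectB$,'' while you correctly sense this needs an argument and set out to supply one. The structure is right: if $\dsiter{q+1}=\emptyset$ then the pull emptied $\ds$ so $\expectB_q=\expectB$, $K=\emptyset$, and every $x\in S_q$ has $\extd{x}<\actualB_q$ (hence lies in $\actualU_q$); the shortest-path-tracing argument then shows $\rangetree{S_q}{[\actualB_q,\expectB)}=\emptyset$ (and your phrasing ``incomplete $v$'' is slightly off --- the same trace works for any $z\in\rangetree{S_q}{[\actualB_q,\expectB)}$, complete or not, since the criterion for entering $K$ is a value condition, not a completeness condition). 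But the final step ``so one can conclude $\actualB_q=\expectB_q$'' does not follow from $\boundtree{S_q}{\actualB_q}=\boundtree{S_q}{\expectB_q}$ alone: a recursive call could in principle return a partial $\actualB_q<\expectB_q$ without there being any ``missed'' vertex. What closes the gap is that a partial return on level $\layer-1$ means the sub-call's own data structure $\ds'$ is non-empty at termination, and every $w\in\ds'$ satisfies $d(w)\ge\actualB_q$ (by \cref{lemma:make-progress} applied to the sub-call), $d(w)<\expectB_q$, and $w\in T(P')\subseteq T(S_q)$ (by proposition~(b) of the sub-call). This exhibits an element of $\rangetree{S_q}{[\actualB_q,\expectB)}$, contradicting the emptiness you established, and forces $\actualB_q=\expectB_q=\expectB$. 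Equivalently, you can argue at the parent level: proposition~(b) at iteration $q+1$ gives $\rangetree{\pivotP}{[\actualB_q,\expectB)}=\boundtree{\dsiter{q+1}}{\expectB}=\emptyset$, and $w$ above lies in this set. Either way, the missing piece is invoking the sub-call's invariants to produce the witness; as written, your final inference is an unjustified leap, even though your instinct to interrogate this case is exactly right and goes further than the paper's one-sentence treatment.
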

\begin{proof}
We can prove this by induction on the number of levels. The base case $\layer = 0$ clearly holds.

Suppose there are $q$ iterations. Before the $q$-th iteration, $|\actualU| < k2^{\layer t}$. After the $q$-th iteration, the number of newly added vertices $|\actualU_q| \leq 4k2^{(\layer - 1) t}$ (by \cref{lemma:main-algo-correctness}, the assumptions for recursive calls are satisfied), and because $|\pivotU| \leq k|S| \leq k2^{\layer t}$, we have $|\actualU| \leq 4k2^{\layer t}$.
If $\ds$ is empty, the algorithm succeeds and quits with $\actualB = \expectB$. Otherwise, the algorithm ends partially because $|\actualU| \geq k2^{\layer t}$.
\end{proof}

\begin{lemma}\label{lemma:make-progress}
Immediately before the $i$-th iteration of \cref{alg:main}, $\min_{x\in \ds} \exttrued{x} \geq \actualB_{i-1}$.
\end{lemma}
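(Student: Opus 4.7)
The plan is to piggyback on proposition (b) established during the proof of \cref{lemma:main-algo-correctness}, which asserts that just before the $i$-th iteration,
\[
\rangetree{\pivotP}{[\actualB_{i-1}, \expectB)} \;=\; \boundtree{\dsiter{i}}{\expectB} \;=\; \boundtree{\dsiter{i}^*}{\expectB}.
\]
Given this identity, it suffices to show that every $x\in \dsiter{i}$ actually lies in $\boundtree{\dsiter{i}}{\expectB}$; then $x\in\rangetree{\pivotP}{[\actualB_{i-1},\expectB)}$ immediately gives $\exttrued{x}\geq \actualB_{i-1}$ as desired.

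To show $x\in \boundtree{\dsiter{i}}{\expectB}$, note first that $x\in\tree{x}\subseteq \tree{\dsiter{i}}$ trivially. The remaining content is that $\exttrued{x}<\expectB$. For this I would walk through the three ways a key/value pair can enter $\ds$ and observe that in each case the value is a genuine $s$-to-$x$ path length and is strictly less than $\expectB$:
\begin{itemize}
    \item the initial insertion on \cref{code:ds_insert} uses $\extd{x}$ for $x\in \pivotP\subseteq S$, and by the precondition $\expectB>\max_{y\in S}\extd{y}$;
    \item the insertion on \cref{code:single_insert} uses $\extd{u}+\edgeweight{u}{v}\in[\expectB_i,\expectB)$, so in particular is below $\expectB$;
    \item the $\textsc{BatchPrepend}$ on \cref{code:batch_prepend} contributes values from $K$ in $[\actualB_i,\expectB_i)$ and values $\extd{x}$ for $x\in S_i$ with $\extd{x}\in[\actualB_i,\expectB_i)$, all strictly less than $\expectB$.
\end{itemize}
Since each such value corresponds to the length of an actual $s$-to-$x$ path (produced either by a previous relaxation or as a current label $\extd{x}$), the true distance satisfies $\exttrued{x}$ is at most that value, hence $\exttrued{x}<\expectB$.

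Combining these two facts yields $x\in\boundtree{\dsiter{i}}{\expectB}=\rangetree{\pivotP}{[\actualB_{i-1},\expectB)}$, so $\exttrued{x}\geq \actualB_{i-1}$, and taking the minimum over $x\in\ds$ gives the claim. The main conceptual step is recognising that proposition (b) of \cref{lemma:main-algo-correctness} already encodes exactly the invariant we need; the only real work is the bookkeeping argument that every resident of $\ds$ has $\exttrued{\cdot}<\expectB$, which I expect to be the lightest part of the proof.
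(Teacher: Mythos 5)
Your proof is correct, but it takes a genuinely different route from the paper's. The paper first observes directly that $\min_{v\in\ds}\extd{v}\geq \actualB_{i-1}$ by construction of $\ds$ (initial inserts at $\geq\actualB_0$, the Pull separator, and the batch-prepend ranges), then splits into cases: for complete $v$, $\exttrued{v}=\extd{v}\geq\actualB_{i-1}$ immediately; for incomplete $v$, it invokes proposition~(a) of \cref{lemma:main-algo-correctness} to place $v$ in $\rangetree{\pivotP}{[\actualB_{i-1},\expectB)}$. You instead lean on proposition~(b) and work from the other side: you trace the three entry points into $\ds$ to establish the \emph{upper} bound $\exttrued{x}<\expectB$ for every resident key, so that $x\in\boundtree{\dsiter{i}}{\expectB}$, and then read off the lower bound $\exttrued{x}\geq\actualB_{i-1}$ from the identity $\boundtree{\dsiter{i}}{\expectB}=\rangetree{\pivotP}{[\actualB_{i-1},\expectB)}$. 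Both are sound; the paper's version is a bit shorter because it never needs to re-examine how keys entered $\ds$, while yours makes the ``everything in $\ds$ has true distance below $\expectB$'' fact explicit, which is a nice complementary observation. One tiny nit: the reasoning ``$x\in\tree{x}\subseteq\tree{\dsiter{i}}$'' silently uses that $x$ is reachable from $s$, which the paper assumes globally, so you are fine, but it is worth flagging since $\tree{\cdot}$ is defined as a subtree of the shortest-path tree $T(s)$.
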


\begin{proof}
From the construction of $\ds$, immediately before the $i$-th iteration of \cref{alg:main}, $\min_{v\in \ds}\extd{v} \geq \actualB_{i - 1}$. For $v\in \ds$, if $v$ is complete, then $\exttrued{v} = \extd{v} \geq \actualB_{i - 1}$. If $v$ is incomplete, by proposition (a) of \cref{lemma:main-algo-correctness}, because $v\in \rangetree{\pivotP}{[\actualB_{i - 1}, \expectB)}$, $\exttrued{v} \geq \actualB_{i - 1}$.
\end{proof}

\begin{lemma}\label{lemma:pivot-size}
Denote by $\expectU := \boundtree{S}{B}$ the set that contains every vertex $v$ with $\exttrued{v} < \expectB$ and the shortest path to $v$ visits some vertex of $S$.
After running \cref{alg:main}, $\min\{|\expectU|, k\abs{S}\} \leq |\actualU|$ and $|S| \leq |U|$.
\end{lemma}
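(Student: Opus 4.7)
The plan is to do a two-case analysis on how \textsc{BMSSP} terminates, combining the correctness characterization from \cref{lemma:main-algo-correctness} with the size lower bound from \cref{lemma:size-constraint}. Both desired inequalities fall out essentially as a direct calculation once the right invariant is selected in each case.

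\textbf{Case 1 (successful execution, $\actualB = \expectB$).} By \cref{lemma:main-algo-correctness}, $\actualU = \boundtree{S}{\actualB} = \boundtree{S}{\expectB} = \expectU$, so immediately $|\actualU| = |\expectU| \geq \min\{|\expectU|, k|S|\}$. For the second inequality, I would argue $S \subseteq \expectU$: for any $v \in S$, the shortest path to $v$ visits $v$ itself, and by the precondition $\expectB > \max_{x \in S}\extd{x}$ of \cref{lemma:bmssp} together with soundness $\exttrued{v} \leq \extd{v}$, we have $\exttrued{v} < \expectB$. Thus $v \in \boundtree{S}{\expectB} = \expectU = \actualU$, giving $|S| \leq |\actualU|$.

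\textbf{Case 2 (partial execution, $\actualB < \expectB$).} By \cref{lemma:size-constraint}, $|\actualU| \geq k \cdot 2^{\layer t}$. Combining this with the precondition $|S| \leq 2^{\layer t}$ and with $k \geq 1$ yields $|\actualU| \geq k|S| \geq \min\{|\expectU|, k|S|\}$ and also $|\actualU| \geq 2^{\layer t} \geq |S|$, which covers both claims.

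There is no real obstacle here; the heavy lifting has already been done in \cref{lemma:main-algo-correctness} and \cref{lemma:size-constraint}. The one delicate spot worth noting is that in the successful case one needs a \emph{strict} inequality $\exttrued{v} < \expectB$ to conclude $v \in \boundtree{S}{\expectB}$, and this relies on the precondition $\expectB > \max_{x\in S} \extd{x}$ being strict rather than weak together with the monotone-soundness invariant $\exttrued{v} \leq \extd{v}$ maintained by all edge relaxations throughout the algorithm.
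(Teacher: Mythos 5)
Your proof is correct and matches the paper's argument essentially line for line: in the successful case observe $S \subseteq \expectU = \actualU$, and in the partial case invoke the size lower bound $|\actualU| \geq k2^{\layer t}$ together with $|S| \leq 2^{\layer t}$. The only difference is that you spell out why $S \subseteq \expectU$ (via the strict precondition $\expectB > \max_{x\in S}\extd{x}$ and soundness $\exttrued{v} \leq \extd{v}$), which the paper leaves implicit.
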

\begin{proof}
If it is a successful execution, then $S\subseteq \expectU = \actualU$.
If it is partial, then $k\abs{S} \leq k2^{\layer t} \leq \abs{\actualU}$.
\end{proof}

For a vertex set $U$ and two bounds $c < d$, denote $N^{+}(U) = \{(u, v): u\in U\}$ as the set of outgoing edges from $U$, and $N^{+}_{[c, d)}(U) = \{(u, v): u\in U \text{ and } \exttrued{u} + \edgeweight{u}{v} \in [c, d)\}$.

\begin{lemma}[Time Complexity]\label{lemma:main-algo-time}
Suppose a large constant $\timeconst$ upper-bounds the sum of the constants hidden in the big-O notations in all previously mentioned running times: in find-pivots, in the data structure, in relaxation, and all other operations. \cref{alg:main} solves the problem in \cref{lemma:bmssp} in time
\[\timeconst(k + 2 t / k)(\layer + 1)|\actualU| + \timeconst (t + l \log k) |N^{+}_{[\min_{x\in S}\exttrued{x}, \expectB)}(\actualU)|.\]
With $k = \lfloor \log^{1/3}(n) \rfloor$ and $t = \lfloor \log^{2/3}(n) \rfloor$, calling the algorithm with $l = \lceil (\log n) / t\rceil, S = \{s\}, B = \infty$ takes $O(m\log^{2/3}n)$ time.

\end{lemma}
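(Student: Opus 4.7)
The plan is to prove the time bound by induction on the recursion level $\layer$, using the structural facts from the correctness analysis---disjointness of the $\actualU_i$ by \cref{remark:u_i-disjointness}; $|\pivotP|\leq|\actualU|/k$ and $|\pivotU|=O(|\actualU|)$ by \cref{lemma:findpivots} and \cref{lemma:pivot-size}; the progress guarantee $\min_{x\in S_i}\exttrued{x}\geq \actualB_{i-1}$ from \cref{lemma:make-progress}---together with the amortized operation costs from \cref{lemma:partition} as summarized in \cref{remark:amortized-time}.

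For the base case $\layer=0$, \textsc{BaseCase} runs a Dijkstra on a binary heap of size at most $k+1$, so the total work is $O((|\actualU|+|N^+(\actualU)|)\log k)$, which fits inside $\timeconst(k+2t/k)|\actualU|+\timeconst t\,|N^+(\actualU)|$ since $\log k \leq t$.

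For the inductive step at level $\layer$, I would separate the cost of one call into a per-call overhead and the cost of the recursive sub-calls. The per-call overhead consists of: \textsc{FindPivots}, costing $O(k|\actualU|)$; inserting the $|\pivotP|\leq|\actualU|/k$ pivots at $O(t)$ each, totalling $O(t|\actualU|/k)$; the \textsc{Pull}s amortizing to $O(|\actualU|)$; and the \textsc{BatchPrepend}s at $O(\log k)$ per element, contributing $O(\log k\,|\actualU|)$ for the re-inserted $S_i$-vertices and $O(\log k\,|N^+_{[\min_{x\in S}\exttrued{x},\expectB)}(\actualU)|)$ for the $K$-vertices. Summing the inductive bound over the recursive calls, and using that the $\actualU_i$ are disjoint subsets of $\actualU$ and the windows $[\min_{x\in S_i}\exttrued{x},\expectB_i)$ are pairwise disjoint sub-intervals of $[\min_{x\in S}\exttrued{x},\expectB)$, yields $\timeconst(k+2t/k)\layer|\actualU|+\timeconst(t+(\layer-1)\log k)|N^+_{[\min_{x\in S}\exttrued{x},\expectB)}(\actualU)|$. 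Adding the per-call overhead of $O((k+2t/k)|\actualU|+\log k\,|N^+|)$ increments the coefficient of $|\actualU|$ by $\timeconst(k+2t/k)$ and the coefficient of $|N^+|$ by $\timeconst\log k$, exactly matching the induction target.

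The main delicate point will be the direct-\textsc{Insert} cost during relaxation: charging $O(t)$ per direct insert \emph{per level} would sum to $O(\layer t)|N^+|$, exceeding the claimed $O(t+\layer\log k)|N^+|$. I will rely on the observation in \cref{subsec:remark} that any edge $(u,v)$ triggers a direct \textsc{Insert} on at most one level of the recursion subtree rooted at the current call (once $\extd{u}+\edgeweight{u}{v}\geq\expectB_i$, the bound $\expectB_y\leq\expectB_i$ of every descendant $y$ with $u\in\actualU_y$ rules out $(u,v)$ thereafter). Hence the entire subtree's direct-insert cost is bounded \emph{globally} by $\timeconst t\,|N^+_{[\min_{x\in S}\exttrued{x},\expectB)}(\actualU)|$, absorbing into the $\timeconst t\,|N^+|$ summand of the target rather than being charged per level. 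Finally, plugging $\layer=\lceil(\log n)/t\rceil$, $k=\lfloor\log^{1/3}n\rfloor$, $t=\lfloor\log^{2/3}n\rfloor$, $|\actualU|\leq n$, and $|N^+(\actualU)|\leq m$ into the formula gives $(k+2t/k)(\layer+1)=O(\log^{2/3}n)$ and $t+\layer\log k=O(\log^{2/3}n)$, yielding the overall $O(m\log^{2/3}n)$ bound.
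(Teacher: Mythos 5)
Your overall structure matches the paper's proof: induct on $\layer$, bound the per-call overhead, sum the inductive hypothesis over the iterations, and use the disjointness of the $\actualU_i$'s. The base case, the \textsc{FindPivots}/pivot-insertion accounting of $\timeconst(k+t/k)|\actualU|$, and the $S_i$ batch-prepend accounting are all fine. But the way you handle the direct-\textsc{Insert} cost leaves a gap.

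Your per-call overhead of $O\bigl((k+2t/k)|\actualU|+\log k\,|N^{+}|\bigr)$ omits the $\timeconst t\sum_i|N^{+}_{[\expectB_i,\expectB)}(\actualU_i)|$ cost of direct insertions at the \emph{current} level. You then patch this by appealing to the global once-per-edge observation from Section~\ref{subsec:remark} and claim it "absorbs into the $\timeconst t\,|N^{+}|$ summand." But that summand is \emph{already consumed} by the inductive hypothesis: summing $\timeconst\bigl(t+(\layer-1)\log k\bigr)|N^{+}_{[\min_{x\in S_i}\exttrued{x},\expectB_i)}(\actualU_i)|$ over $i$ produces the $\timeconst t\,|N^{+}|$ term. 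If you then add the current level's direct-insert cost of $\timeconst t\sum_i|N^{+}_{[\expectB_i,\expectB)}(\actualU_i)|$ on top, the $|N^{+}|$ coefficient becomes $2t+(\layer-1)\log k$ unless you justify why the two do not add. The global "inserted on at most one level" fact is a true \emph{consequence}, but it is not in a form you can plug into an inductive step; what the induction needs is the \emph{local} version of that fact.

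The paper's proof supplies exactly this: the inductive hypothesis on iteration $i$ charges only edges in $N^{+}_{[\min_{x\in S_i}\exttrued{x},\expectB_i)}(\actualU_i)\subseteq N^{+}_{[\actualB_{i-1},\expectB_i)}(\actualU_i)$ (using \cref{lemma:make-progress}), whereas the direct inserts at the current level charge only edges in $N^{+}_{[\expectB_i,\expectB)}(\actualU_i)$. These two intervals are disjoint, so the edge sets are disjoint, and since $t\le t+(\layer-1)\log k$ the two contributions combine as
$\mathcal{N}_1+\mathcal{N}_2\le \timeconst\bigl(t+(\layer-1)\log k\bigr)\sum_i|N^{+}_{[\actualB_{i-1},\expectB)}(\actualU_i)|\le\timeconst\bigl(t+(\layer-1)\log k\bigr)|N^{+}_{[\actualB_0,\expectB)}(\actualU)|$,
after which adding the batch-prepend term $\mathcal{N}_3\le\timeconst(\log k)|N^{+}_{[\actualB_0,\expectB)}(\actualU)|$ gives exactly the target coefficient $t+\layer\log k$. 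Also note a small imprecision in your summing step: the windows $[\min_{x\in S_i}\exttrued{x},\expectB_i)$ across different $i$ are \emph{not} in general pairwise disjoint (after a partial recursive call, $S_{i+1}$ can contain values below $\expectB_i$); the bound $\sum_i|N^{+}_{[\min S_i,\expectB_i)}(\actualU_i)|\le|N^{+}_{[\min S,\expectB)}(\actualU)|$ comes solely from the disjointness of the $\actualU_i$'s together with each window being a sub-interval of $[\actualB_0,\expectB)$. Once you replace the global once-per-edge appeal by the local interval-disjointness argument above, the proof closes.
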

\begin{proof}

We prove by induction on the level $\layer$.
When $\layer =  0$, the algorithm degenerates to the classical Dijkstra's algorithm (\cref{alg:base-case}) and takes time $C|\actualU|\log k$, so the base case is proved.
Suppose the time complexity is correct for level $\layer - 1$. Now we analyze the time complexity of level $\layer$.

By \cref{remark:amortized-time}, each insertion takes amortized time $\timeconst t$ (note that $\log k < t$); each batch prepended element takes amortized time $\timeconst\log k$; each pulled term takes amortized constant time. But we may ignore \textsc{Pull}'s running time because each pulled term must have been inserted/batch prepended, and the constant of \textsc{Pull} can be covered by $\timeconst$ in those two operations.


By \cref{remark:u_i-disjointness}, $\actualU_i$'s are disjoint and $\sum_{i\geq 1}|\actualU_i| \leq |\actualU|$.

Now we are ready to calculate total running time of \cref{alg:main} on level $l$ by listing all the steps.

By \cref{lemma:findpivots}, the \textsc{FindPivots} step takes time $\findpivotconst\cdot\min\{|\expectU|, k|S|\}k$ and $|\pivotP| \leq \min\{|\expectU| / k, |S|\}$. Inserting $\pivotP$ into $\ds$ takes time $\partitionconst |\pivotP|t$. And by \cref{lemma:pivot-size}, their time is bounded by $\timeconst (k + t / k)\abs{\actualU}$.


In the $i$-th iteration, the sub-routine takes
$\timeconst (k + 2 t / k)\layer|\actualU_i| + \partitionconst (t + (l-1)\log k)|N^{+}_{[\min_{x\in S_i}\exttrued{x}, \expectB_{i})}(\actualU_i)|$
time by the induction hypothesis.
Taking the sum, by \cref{lemma:make-progress}, $N^{+}_{[\min_{x\in S_i}\exttrued{x}, \expectB_i)}(\actualU_i) \subseteq N^{+}_{[\actualB_{i-1}, \expectB_i)}(\actualU_i)$. Then the sum of time spent by all the sub-routines is bounded by
\[\timeconst (k + 2 t / k)\layer |\actualU| + \overbrace{\timeconst(t + (\layer - 1) \log k)\sum_{i \geq 1}|N^{+}_{[\actualB_{i-1}, \expectB_{i})}(\actualU_i)|}^{\mathcal{N}_1}.\]

In the following relaxation step,
for each edge $(u, v)$ originated from $\actualU_{i}$, if $\extd{u} + \edgeweight{u}{v} \in [\expectB_i, \expectB)$, they are directly inserted; if $\extd{u} + \edgeweight{u}{v} \in [\actualB_{i}, \expectB_{i})$, they are batch prepended. Again by \cref{remark:u_i-disjointness}, all $\actualU_i$'s are complete, so for the previous statements we can replace $\extd{\cdot}$ with $\exttrued{\cdot}$. Thus in total, it takes time
\[\overbrace{\partitionconst t\sum_{i\geq 1}|N^{+}_{[\expectB_{i}, \expectB)}(\actualU_{i})|}^{\mathcal{N}_2} + \overbrace{\partitionconst (\log k) \sum_{i\geq 1} |N^{+}_{[\actualB_{i}, \expectB_{i})}(\actualU_{i})|}^{\mathcal{N}_3}.\]

Vertices in $S_{i}$ were also batch prepended, and this step takes time $O(|\{x\in S_i: \extd{x}\in [\actualB_i, \expectB_i)\}|\log k)$. This operation takes non-zero time only if $\actualB_i < \expectB_i$, i.e., the $i$-th sub-routine is a partial execution. Thus in total this step takes time $\partitionconst \sum_{\text{partial } i}|S_i|\log k \leq (\partitionconst |\actualU|\log k) /k \leq \partitionconst |\actualU| t /k$.

Therefore, the total running time on the level $\layer$ is
\[\timeconst(k + 2 t / k) (\layer + 1) |\actualU| + \mathcal{N},\]
where $\mathcal{N}$ is the sum
\[\mathcal{N} = \overbrace{\timeconst (t +  (\layer - 1) \log k)\sum_{i\geq 1}|N^{+}_{[\actualB_{i-1}, \expectB_{i})}(\actualU_i)| + \timeconst t \sum_{i\geq 1}|N^{+}_{[\expectB_{i}, \expectB)}(\actualU_i)|}^{\mathcal{N}_1+\mathcal{N}_2} + \overbrace{\timeconst (\log k )\sum_{i\geq 1}|N^{+}_{[\actualB_{i}, \expectB_i)}(\actualU_i)|}^{\mathcal{N}_3}.\]

For $\mathcal{N}_1+\mathcal{N}_2$, because
\[\sum_{i\geq 1}|N^{+}_{[\actualB_{i-1}, \expectB_{i})}(\actualU_{i})| + \sum_{i\geq 1}|N^{+}_{[\expectB_i, \expectB)}(\actualU_{i})| = \sum_{i\geq 1}|N^{+}_{[\actualB_{i-1}, \expectB)}(\actualU_{i})| \leq |N^{+}_{[\actualB_0, \expectB)}(\actualU)|, \]
and $\actualB_0\geq \min_{x\in S}\exttrued{x}$, it is bounded by $\timeconst (t + (\layer - 1)\log k)|N^{+}_{[\min_{x\in S}\exttrued{x}, \expectB)}(\actualU)|$.
$\mathcal{N}_3$ is bounded by $\timeconst (\log k) |N^{+}_{[\actualB_{0}, \expectB)}(\actualU)| \leq \timeconst (\log k) |N^{+}_{[\min_{x\in S}\exttrued{x}, \expectB)}(\actualU)|$.

Therefore $\mathcal{N}\leq \timeconst (t + \layer \log k) |N^{+}_{[\min_{x\in S}\exttrued{x}, \expectB)}(\actualU)| $.
\end{proof}

\bibliographystyle{alpha}
\bibliography{references}

\end{document}